\newtheorem{theorem}{Theorem}[section]
\newtheorem{lemma}[theorem]{Lemma}
\newtheorem{observation}{Observation}[section]
\DeclareMathOperator*{\argmin}{argmin}
\newcommand{\tr}{\mathrm{tr}}
\begin{document}

\preprint{APS/123-QED}

\title{Parallel ergotropy: Maximum work extraction via parallel local unitary operations}%

\author{Riccardo Castellano}
\email{riccardo.castellano@sns.it}
\affiliation{Scuola Normale Superiore, I-56126 Pisa, Italy}
\affiliation{ICFO - Institut de Ci\`encies Fot\`oniques, The Barcelona Institute of Science and Technology, 08860 Castelldefels, Barcelona, Spain}
\affiliation{Dipartimento di Fisica dell’Universit\`a di Pisa, Largo Pontecorvo 3, I-56127 Pisa, Italy}
\author{Ranieri Nery}
\affiliation{ICFO - Institut de Ci\`encies Fot\`oniques, The Barcelona Institute of Science and Technology, 08860 Castelldefels, Barcelona, Spain}
\author{Kyrylo Simonov}
\affiliation{Fakult\"at f\"ur Mathematik, Universit\"at Wien, Oskar-Morgenstern-Platz 1, 1090 Vienna, Austria}
\affiliation{ICFO - Institut de Ci\`encies Fot\`oniques, The Barcelona Institute of Science and Technology, 08860 Castelldefels, Barcelona, Spain}
\author{Donato Farina}
\email{donato.farina@unina.it}
\affiliation{Physics Department E. Pancini- Universit\`a degli Studi di Napoli Federico II, Complesso Universitario Monte S. Angelo- Via Cintia- I-80126 Napoli, Italy}
\affiliation{ICFO - Institut de Ci\`encies Fot\`oniques, The Barcelona Institute of Science and Technology, 08860 Castelldefels, Barcelona, Spain}

\date{\today}

\begin{abstract}
Maximum quantum work extraction is generally defined in terms of the \textit{ergotropy} functional, no matter how experimentally complicated is the implementation of the optimal unitary allowing for it, especially in the case of multipartite systems.
In this framework, 
we consider a quantum battery made up of many interacting sub-systems and study the maximum extractable work  via concurrent local unitary operations on each subsystem.
We call the resulting functional \textit{parallel ergotropy}.
Focusing on the bipartite case, 
we first observe that
parallel ergotropy outperforms
work extraction via \textit{egoistic} strategies,
in which
the first agent A extracts locally on its part the maximum available work and the second agent B, subsequently, extracts what is left on the other part.
For the agents, this showcases the need of cooperating for an overall benefit.
Secondly, from the informational point of view, we observe that the parallel capacity of a state can detect entanglement and compare it with the statistical entanglement witness that exploits fluctuations of stochastic work extraction.
Additionally, we face the technical problem of computing parallel ergotropy. 
We derive analytical upper bounds for specific classes of states and Hamiltonians and provide  receipts to obtain numerical upper bounds
via semi-definite programming in the generic case.
Finally, extending the concept of parallel ergotropy, we demonstrate that 
system's free-time evolution and application of local unitaries
allow one to saturate the gap with the ergotropy of the whole system.
\end{abstract}

\maketitle

\section{Introduction}

In the realm of classical thermodynamics, storage and transformation of energy embody its very essence. Rapid development of quantum technologies, however, requires revisiting foundational thermodynamic principles in order to address them at a microscopic level. In particular, counter-intuitive phenomena such as quantum superposition and entanglement offer many-body quantum systems as energy storage devices that overcome limitations of classical batteries \cite{Binder2015, Campaioli2023, Catalano2023}. Indeed, such quantum batteries are characterized by increasing charging power and energy storage capacity. The usefulness of a quantum battery can be captured by two quantities: 1) \textit{ergotropy}, the maximal amount of work that can be extracted from its state via cyclic unitary dynamics generated by the Hamiltonian of the quantum battery \cite{allahverdyan2004maximal}, and 2) \textit{quantum battery capacity}, the maximal amount of work that can be transferred via such unitary cycles \cite{BatteryCap}.

Experimental implementation of a quantum battery composed of several elements can be demanding since the resulting unitary operations may be highly non-local. This limitation can be faced by focusing on a subsystem instead of the entire compound, hence, addressing ergotropy with respect to unitary operations on this subsystem~\cite{LErgo}. However, it raises the following question: How much work can be extracted from the whole compound via concurrent local work extracting unitaries with respect to each subsystem? In this paper, we explore it by introducing a more experiment-friendly notion of \textit{parallel ergotropy} that captures concurrent work extraction from each subsystem composing a quantum battery. 
{We notice that this differs, e.g., from the ergotropy extraction protocol in multipartite systems recently explored in \cite{BipartiteQ.B.SolidState}, where one first switches off the interaction, requiring a form of non-local control over the whole system.}
We shall provide general lower and upper bounds for parallel ergotropy and calculate it explicitly for relevant specific classes of quantum states. 
In a similar manner, we introduce and study \textit{parallel capacity}, i.e., quantum battery capacity under concurrent local unitaries. We show that  parallel capacity can serve as an entanglement witness. Finally, we prove that combining free time evolutions and local unitaries on the subsystems, we can always close the gap with the \textit{global} ergotropy of the system.

The paper is organized as follows. In Sec.~\ref{sec:parallel-def}, we set the necessary framework and introduce the ergotropy, local ergotropy, and quantum battery capacity functionals. In~\ref{sec:parallel-def}, we introduce the figure of merit of the paper, the parallel ergotropy and parallel capacity functionals. 
In Sec.~\ref{sec:bounds} we establish general lower and upper bounds for this quantity. In Sec.~\ref{sec:LMM}, we calculate parallel ergotropy for locally maximally mixed states, 
or for arbitrary states but with null local Hamiltonian terms,
providing an analytic upper bound for quantum systems of arbitrary dimension and the exact value for two-qubit systems. 
In Sec.~\ref{sec:SDP} we introduce a general numerical technique for upper bounding parallel ergotropy and we perform a comparison among different estimates for two-qubit and two-qutrit systems. In Sec.~\ref{sec:ParallelCap}, we demonstrate that the parallel capacity functional can be used as an entanglement witness. A summary of the results together with an outlook to possible future developments are presented in Sec.~\ref{sec:conc}.

\section{Preliminaries}
\label{sec:preliminaries}
We start by recalling the definition of ergotropy. Let us consider a $d$-dimensional quantum system $S$ prepared in a state $\rho \in D(\mathcal{H})$, where $D(\mathcal{H})$ denotes the space of density operators on the Hilbert space $\mathcal{H}$ of $S$, and equipped with a Hamiltonian $H$. Under cyclic control of the Hamiltonian the state $\rho$ evolves due to a certain unitary operation $U$ inducing a change 
\begin{equation}\label{eq:MeanEn}
 \tr[\rho H] \rightarrow \tr[U \rho U^\dag H]
\end{equation}
in the mean energy of $S$. The latter can be associated with extraction of work from $\rho$, and the maximal amount of work that can be extracted via unitary operations is defined by the \textit{ergotropy functional}
\begin{eqnarray}
\nonumber \mathcal{E}(\rho, H) &:=& \tr[\rho H] - \min_{U \in \mathtt{U}(d)}\Bigl( \tr[ U \rho U^\dag H ] \Bigr)\\
&=& \tr\Bigl[ (\rho - \overline{U} \rho \overline{U}^\dag) H\Bigr]\,,
\label{eq:ergo}
\end{eqnarray}
where $\mathtt{U}(d)$ is the unitary group of degree $d$, and $\bar{U}$ is the optimal unitary operation, whose closed form can be provided in terms of the eigenstates of $\rho$ and $H$ \cite{allahverdyan2004maximal, Francica2017, Francica2020}. Therefore, \eqref{eq:ergo} establishes an upper bound on energy that can be removed from $S$ in the state $\rho$. Substituting minimization with maximization in \eqref{eq:ergo} provides a lower bound on change \eqref{eq:MeanEn} in the mean energy of $S$ characterizing how much it can be charged in the state $\rho$. The difference between these bounds characterizes, hence, the maximal amount of work that $S$ can transfer as a quantum battery. It can be formalized as \textit{quantum battery capacity functional}
\begin{eqnarray}
    \nonumber \mathcal{C}(\rho, H) &:=& \max_{U \in \mathtt{U}(d) }\Bigl( \tr [U \rho U^\dag H ] \Bigr) - \min_{U \in \mathtt{U}(d) }\Bigl( \tr [U \rho U^\dag H ] \Bigr) \\
    &=& \tr \Bigl[ (\underline{U} \rho \underline{U}^\dag - \overline{U} \rho \overline{U}^\dag ) H \Bigr] \,,
\end{eqnarray}
where $\underline{U}$ and $\overline{U}$ are the optimal unitaries achieving upper and lower bounds on change \eqref{eq:MeanEn} in the mean energy of $S$, respectively, with $\overline{U}$ being defined in \eqref{eq:ergo}.

Experimental implementation of work extracting unitary operation $\overline{U}$ (as well as work injecting $\underline{U}$) can be challenging, for example, if $S$ is multipartite, so that $\overline{U}$ can result to be highly non-local. Such limitations for global work extraction have been recently explored in Ref.~\cite{LErgo} and motivate to consider local work extraction, where the unitary cycle is performed on a subsystem of $S$. In particular, let us consider a bipartite system $S$ shared between two agents $A$ and $B$. It is equipped with a Hamiltonian $H$ and initially prepared in a state $\rho \in D(\mathcal{H}_a \otimes \mathcal{H}_b)$, where $\mathcal{H}_a$ and $\mathcal{H}_b$ are the Hilbert spaces of subsystems of $S$ associated to $A$ and $B$, respectively. Then the maximal amount of work that $A$ can extract from $S$ is given by the \textit{local ergotropy functional on $A$}
\cite{LErgo}
\begin{eqnarray}
\nonumber \mathcal{E}_{\rm La}(\rho, H) &:=& \max_{U_{a} \in \mathtt{U}(d_a)}\Bigl( \tr\Bigl[ \bigl( \rho - (U_{a} \otimes \mathbb{1}_{b})\rho (U^\dag_{a} \otimes \mathbb{1}_{b})\bigr) H \Bigr] \Bigr) \\
&=& \tr\Bigl[ \bigl( \rho - (\overline{U}_{a} \otimes \mathbb{1}_{b})\rho (\overline{U}^\dag_{a} \otimes \mathbb{1}_{b})\bigr) H \Bigr] \, ,
\label{eq:local-ergo}
\end{eqnarray}
where $\overline{U}_{a}$ is the optimal unitary operation performed by $A$, $d_a$ is dimension of the subsystem of $S$ associated to $A$, and $\mathbb{1}_b$ is an identity operator on $\mathcal{H}_b$. Therefore, for a multipartite system $S$ composed of $n$ subsystems shared among $\{A_i\}_{i=1}^n$ agents, we formalize the maximal amount of work that can be extracted from $S$ by $A_i$ as \textit{local ergotropy functional on $A_i$}
\begin{eqnarray}
\nonumber \mathcal{E}_{\rm La_i}(\rho, H) &:=& \max_{U_{a_i} \in \mathtt{U}(d_{a_i})}\Bigl( \tr\Bigl[ ( \rho - U_{\rm La_i} \rho U_{\rm La_i}^\dag ) H \Bigr] \Bigr) \\
&=& \tr\Bigl[ ( \rho - \overline{U}_{\rm La_i}\rho \overline{U}_{\rm La_i} ) H \Bigr] \, ,
\label{eq:local-ergo-N}
\end{eqnarray}
where we denote $U_{\rm La_i} := \mathbb{1}_{a_1} \otimes \ldots \otimes U_{a_i} \otimes \ldots \otimes \mathbb{1}_{a_n}$ and $\overline{U}_{\rm La_i} := \mathbb{1}_{a_1} \otimes \ldots \otimes \overline{U}_{a_i} \otimes \ldots \otimes \mathbb{1}_{a_n}$, with $\overline{U}_{a_i}$ being the corresponding optimal unitary operation on $A_i$.

\section{Parallel ergotropy}
\label{sec:parallel-def}
\begin{figure}
    \centering
    \begin{overpic}[width=.45\textwidth]{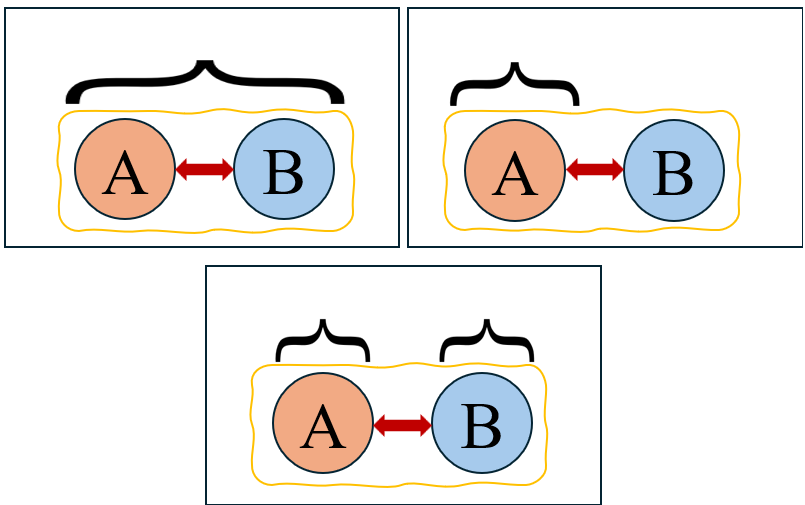}
    \put(8,57){{(global) ergotropy}}
    \put(55,57){{local ergotropy}}
    \put(34,26){\textit{parallel ergotropy}}
\end{overpic}
    \caption{Schematic representation for a bipartite system of \textit{parallel ergotropy}: concurrent local unitaries on A and B; compared to (global) ergotropy \cite{allahverdyan2004maximal}: a single global unitary; and local ergotropy from Ref.\,\cite{LErgo}: local unitary on A only. The parts A and B interact via a Hamiltonian term represented by a red double arrow. Unitaries are depicted as curly brackets and the joint AB state by the yellow envelope.}
    \label{fig:schema}
\end{figure}
Let us consider a multipartite quantum battery consisting of many interacting subsystems $S_1, \ldots, S_N$ shared between $N$ agents. As anticipated, performing generic global unitaries on the whole system might be experimentally challenging, and it is reasonable to question work extraction under restriction to operations performed locally by each agent.
While local ergotropy \eqref{eq:local-ergo} quantifies work that can be extracted by a single agent acting on its subsystem, it cannot catch cooperative work extracting strategies of the agents. In order to fill this gap, we introduce the \textit{parallel ergotropy} (PE) as the maximum energy extractable via concurrent local unitary operations,
\begin{equation}\label{eq:parallel-ergo-N}
    \mathcal{E}_{\rm P}(\rho,H) := \max_{U_{\mathrm{P}} \in \mathcal{U}_{\mathrm{P}}^N} \tr  \Bigl[(\rho - U_{\mathrm{P}} \rho U_{\mathrm{P}}^\dag) H\Bigr],
\end{equation}
where $\mathcal{U}_{\rm P}^N$ denotes the set of {\it $N$-parallel unitaries}, i.e., unitaries of form 
\begin{equation}
\label{setUp}
    U_{\mathrm{P}} := \otimes_{i=1}^N U_i\,,
\end{equation}
with $U_i \in \mathtt{U}(d_i)$ acting on $i$-th subsystem.
{A schematics of \textit{parallel ergotropy} for a bipartite system, compared to (global) ergotropy \cite{allahverdyan2004maximal} and local ergotropy from Ref.\,\cite{LErgo}, 
is reported in Fig.\,\ref{fig:schema}.} 

A comment on the definition \eqref{eq:parallel-ergo-N} is necessary. As for the local ergotropy \eqref{eq:local-ergo}, the realization of such local unitary evolutions is intended on time scales that are much smaller than any other time scale in the problem.
Put differently,
functional \eqref{eq:parallel-ergo-N} does not allow for slow driving of the local Hamiltonian and thus is time-dependent on the free dynamics of the system. In formulas, defining $U_0(t):=\exp (-i H t) $,  
\begin{equation}
\mathcal{E}_{\rm P}({U_{0}(t)~ \rho~ U_{0}(t)^\dag},H) \neq \mathrm{const}\,.
\end{equation} 
Therefore, PE can be regarded as the figure of merit for work extraction only if operations are carried out instantaneously with respect to the dynamics of the system.

We point out two main reasons for which one may stick to this restricted class of operations instead of the bigger set $\mathcal{U}_{\mathrm{Pex}} $ introduced in Section \ref{EPE}, which exploits both \textit{parallel} (hence, local on each subsystem) Hamiltonian control and internal evolution induced $H$. First, one may need to charge and discharge the battery on a faster time-scale than the one characterizing the dynamics of the system.
Second, the precise time-keeping necessary to perform \textit{free evolution} operations has its own energy and thermodynamic cost \cite{CloksCost, Q.Cloks}, which might be higher than the advantage of reaching generic global unitaries over the local ones. Notice that both arguments are also valid for the local ergotropy introduced in Ref.~\cite{LErgo}.

In the rest of the paper, we mainly consider a bipartite quantum battery (i.e., $N=2$). Denoting both parties $A$ and $B$ with associated Hilbert spaces $\mathcal{H}_a$ and $\mathcal{H}_b$, respectively. It is described hence by a bipartite state $\rho_{ab} \in D(\mathcal{H}_a \otimes \mathcal{H}_b)$ and equipped with a Hamiltonian
\begin{equation}
\label{hamilt}
    H = H_{a} + H_{b} + V_{ab} := H_{ab}\,,
\end{equation}
where $V_{ab}$ is the interaction term acting on the joint Hilbert space $\mathcal{H}_{a} \otimes \mathcal{H}_{b}$, and $H_{a}$ and $H_{b}$ are the local terms on $A$ and $B$, respectively. Without loss of generality, we will choose the tree quantities such that $ \tr_{\rm a}[V_{ab}] = \tr_{\rm b}[V_{ab}] = 0$. In what follows, we study the properties of the considered quantum battery with respect to action of 2-parallel unitaries $U_a \otimes U_b$, where $U_a \in \mathtt{U}(d_a)$ and $U_b \in \mathtt{U}(d_b)$. In particular, the PE functional \eqref{eq:parallel-ergo-N} reduces to
\begin{equation}
\label{eq:parallel-ergo}
\mathcal{E}_{\rm P}(\rho_{ab}, H_{ab}) = \max_{U_a, U_b}\tr  [ (\rho_{ab} - (U_a \otimes U_b) \rho_{ab} (U_a^\dag \otimes U_b^\dag))H_{ab}]\,.
\end{equation}
Our choice is due to two main reasons. On the one hand, it allows for a more straightforward comparison with the local ergotropy functional \eqref{eq:local-ergo}. On the other hand, despite its higher experimental accessibility compared with the global ergotropy \eqref{eq:ergo}, both the analytical and numerical treatments of PE appear to be already rather involved for $N=2$.

\section{General parallel bounds and egoistic work extracting strategies}
\label{sec:bounds}
In order to simplify expressions involving several unitary operations, given a state $\rho$, we sometimes represent action of a unitary operator $U$ as a superoperator $\mathsf{U}[\rho] := U\rho U^{\dagger}$. By construction, the following hierarchy of the ergotropic quantities introduced above can be established,
\begin{widetext}
\begin{equation}\label{CooperationGap}
\mathcal{E}(\rho, H) \geq \mathcal{E}_{\rm P}(\rho, H) \geq \mathcal{E}_{\rm La}(\rho, H) + \mathcal{E}_{\rm Lb}(\overline{\mathsf{U}}_a[\rho], H) \geq \mathcal{E}_{\rm La}(\rho, H)\,,
\end{equation}
\end{widetext}
where $\overline{\mathsf{U}}_a $ corresponds to the optimal unitary $\overline{U}_a$ on $A$ resulting from maximization in \eqref{eq:local-ergo}. Notice that the hierarchy \eqref{CooperationGap} is constructed with respect to the party $A$. Due to the symmetry, it remains naturally valid with respect to $B$ as well, i.e., when $a$ and $b$ are exchanged in the ergotropic quantities of \eqref{CooperationGap}.

The new quantity $\mathcal{E}_{\rm La}(\rho, H)+\mathcal{E}_{\rm Lb}(\overline{\mathsf{U}}_a[\rho], H) $ in \eqref{CooperationGap} represents the maximum work that the agents $A$ and $B$ can extract from the system under an \textit{egoistic} strategy. This kind of strategy implies a fixed order in which agents act on the bipartite system, so that $A$ applies the optimal unitary $\overline{U}_a$ maximizing local work extraction with respect to its subsystem, and then $B$ extracts work from the resulting state. This contrasts with the PE implying a \textit{cooperative} strategy, under which $A$ and $B$ agree on applied local unitaries, optimizing thereby total extracted work. This leads to a natural question: can $A$ and $B$ follow egoistic strategies in order to extract work guaranteed by PE? Generally speaking, the answer is no despite a misleading intuition suggested by commutativity of local unitaries of $A$ and $B$.

Let us assume that the competing agents $A$ and $B$ share a 2-qubit battery equipped with a Hamiltonian \eqref{hamilt} and can either cooperate (achieving the value of PE) or not on choosing the local work extracting unitary operations. We proceed to show that, for most $H_{ab}$, there exist states, for which a cooperative strategy is strictly necessary to achieve the maximal work $\mathcal{E}_{\rm P}(\rho_{ab}, H_{ab})$, and the following strict inequalities hold,
\begin{eqnarray}
\label{SwapCooperationGapA}
\mathcal{E}_{\rm P}(\rho_{ab}, H_{ab}) \!\! &>& \!\! \mathcal{E}_{\rm Lb}(\rho_{ab}, H_{ab})+\mathcal{E}_{\rm La}(\overline{\mathsf{U}}_b[\rho_{ab}], H_{ab})\,,\quad \\
\mathcal{E}_{\rm P}(\rho_{ab}, H_{ab}) \!\! &>& \!\! \mathcal{E}_{\rm La}(\rho_{ab}, H_{ab})+\mathcal{E}_{\rm Lb}(\overline{\mathsf{U}}_a[\rho_{ab}], H_{ab}) \,.\,\, \label{SwapCooperationGapB}
\end{eqnarray}

An interesting feature is that, sometimes, the first agent acting on the system has to inject work in order to enable the other agent to complete the optimal cooperative gate. To show this, let us we pick here a specific pair of states and Hamiltonians (see Appendix \ref{app:egoistic} for analysis of egoistic strategies for a larger class of states and Hamiltonians).
Let us assume that the quantum battery is equipped with the Hamiltonian $H_{ab}$ characterized by the following eigenvectors,
\begin{eqnarray}
&&\ket{E_{0}}:=\ket{00}  ,\, 
 \ket{E_{1}} :=\ket{11},\, \\
&&\ket{E_{2}}:=\ket{10},\, 
 \ket{E_{3}}:=\ket{01} ,\,      
\end{eqnarray}
where $E_{j}$ denote the corresponding energies fulfilling $E_{0} < E_{1} < E_{2} \leq E_{3}$, and $\{|0/1\rangle\}$ is the computational basis of each qubit. In turn, the state of the battery is given by $ \rho_{ ab} = \ket{E_{1}}\bra{E_{1}}$. In this case, the egoistic strategies are highly inefficient: it is straightforward to verify that any single local unitary increases the energy of the battery. Therefore, the optimal egoistic strategy in this setup is not implementing any work extracting unitary by each agent at all, and 
\begin{equation}
    \mathcal{E}_{\rm La}(\rho_{ab}, H_{ab}) = \mathcal{E}_{\rm Lb}(\rho_{ab}, H_{ab}) = 0.    
\end{equation}
On the other hand, if $A$ and $B$ are allowed to cooperate, a bit flip on each subsystem $X \otimes X$ can be performed, reaching thereby the ground state and extracting PE (which is equal to global ergotropy \eqref{eq:ergo} as well)
\begin{equation}
    \mathcal{E}_{\rm P}(\rho_{ab}, H_{ab}) = \mathcal{E}(\rho_{ab}, H_{ab}) = E_{1}-E_{0}>0. 
\end{equation}
Apart from providing an example of a setup that opens a gap between egoistic and cooperative strategies via \eqref{SwapCooperationGapA}--\eqref{SwapCooperationGapB}, this result demonstrates several interesting features. First of all, we notice that it does not involve coherence, being hence a \textit{classical} result.
On the other hand, it suggests that, depending on the order in which the gates are applied, an optimal strategy to extracted the maximal total work requires that one agent has to invest work, while another receives the entire work gain. Indeed, let us assume that $A$ acts first on its subsystem. Then it can apply a bit flip $X$, reaching thus the state $|E_3\rangle$ and paying an energy cost $E_3-E_1$. After that, $B$ applies a bit flip $X$ on its subsystem, reaching the ground state and receiving the gain $E_3 - E_0$. In turn, the total gain is $E_1-E_0$ as anticipated. Generally speaking, this results form the fact that
\begin{equation}
\nonumber E(\rho_{ab})-E(\mathsf{U}_{a}[\rho_{ab}]) \neq  E(\mathsf{U}_{b}[\rho_{ab}]) - E((\mathsf{U}_{a}\otimes \mathsf{U}_{b})[\rho_{ab}])\,,
\end{equation}
where $E(\rho):={\rm tr} [H \rho]$.

At first, this might seem in contrast with the non-signaling principle. Let us assume that $A$ and $B$ find themselves in laboratories separated by a distance $L$. If $A$ performs $U_{a}$ just after Bob has performed $U_{b}$, amount of extracted work by $A$ could indicate the choice of $U_{b}$. However, it is necessary to take into account that the non-local Hamiltonian $H_{ab}$ is a valid approximation of the real interaction only on time scales longer than $L/c$. Therefore, the effective non-relativistic $H_{ab}$ does not provide a valid description of the dynamics of the system, so that the energy of the quantum battery is not given by $\operatorname{tr}[H_{ab}\rho]$ if $A$ and $B$ perform the local work extracting unitaries in space-separated events. On time scales longer than $\frac{L}{c}$ the previous example is instead valid, and energy can be transferred among the agents. Indeed, LOCC protocols for energy transmission are known as quantum energy teleportation and extensively studied, e.g., in \cite{ETele}.

\section{Parallel bounds in specific cases} 
\label{sec:LMM}
Despite its operational simplicity, generally speaking, PE is a complicated functional to compute. 
In what follows, we provide analytic formulae and bounds for PE for a relevant class of states and Hamiltonians which are computable. 

\subsection{Locally maximally mixed states and/or Hamiltonians with null local terms}
First, we focus on the class of quantum batteries, which can be described (i) by \textit{locally maximally mixed} (LMM) states, that play an important role in quantum information, and/or (ii) in the limit of asymptotically strong Hamiltonian interaction with respect to the Hamiltonian local terms. 

In the case (i), we consider a bipartite LMM state $\rho_{ab} \in D(\mathcal{H}_a \otimes \mathcal{H}_b)$, where $\operatorname{dim}\mathcal{H}_a = \operatorname{dim}\mathcal{H}_b = d $.
This means that the reduced density operators correspond to maximally mixed states \cite{LMM},
\begin{align}\label{eq:LMM}
&\tr_a[\rho_{ab}] = \frac{\mathbb{1}_b}{d}\,, \quad
\tr_b[\rho_{ab}] = \frac{\mathbb{1}_a}{d}\,.
\end{align}
The set of LMM states contains locally maximally entangled states \cite{Bryan2018, Bryan2019}, which are pure states $\rho_{ab}$ satisfying \eqref{eq:LMM}, and form a convex set including pure maximally entangled states as well as mixed states such as Werner states. 
Qubit Bell states are known to be the only extremal points of the set of LMM states for $d=2$. 
For higher dimensions, the set of LMM states is strictly larger: e.g., for $d = 3$ its extremal points are not necessarily locally equivalent \cite{LMM}.

{As detailed in Appendix \ref{ProofLLMBound}, the analysis carried below holds in the case (ii) as well, corresponding to the condition 
\begin{equation}
        H_a=H_b=0
        \label{null-loc-ham}
\end{equation}
on the Hamiltonian. 
In our notation \eqref{hamilt}, it can be seen as the limit of asymptotically strong interaction, in which local contributions are neglected with respect to the interaction term.
A paradigmatic example of a Hamiltonian that satisfies these conditions is provided by the antiferromagnetic Hamiltonian 
\begin{equation}\label{eq:AntiferrH}
    H_{\mathrm{Ant}} := \omega (\Vec{\sigma}_{a} \cdot \Vec{\sigma}_{b}),
\end{equation}
where $\Vec{\sigma}_{a/b}$ are vectors, whose components $\sigma_{a/b}^j$ are generalized Pauli operators \cite{GOP} on $\mathcal{H}_{a/b}$, which reduce to the usual Pauli operators for $d=2$.
}

In order to analyze PE for a quantum battery satisfying (i) and/or (ii), we do the following observation. Exploiting the generalized Pauli operator expansion \cite{GOP}, the action of work extracting unitaries $U_{\rm P}$ in \eqref{eq:parallel-ergo} can be associated to rotations in real space (see Appendix \ref{GPO}). 
The latter form a subgroup $\mathcal{SO}(d^2-1)$ of the corresponding rotation group $\mathtt{SO}(d^2 - 1)$, i.e. $\mathcal{SO}(d^2-1) \subseteq \mathtt{SO}(d^2 - 1)$. 
Therefore, the optimization over parallel unitaries in \eqref{eq:parallel-ergo} can be substituted by optimization over $\mathcal{SO}(d^2-1)$, and PE in the case of systems satisfying (i) and/or (ii) is given by
\begin{equation} \label{eq:optOTV}
    \mathcal{E}_{\rm P}(\rho_{ab}, H_{ab}) = \operatorname{tr}[\mathbf{V}\mathbf{T}] + \max_{\mathbf{O}_a,\;\mathbf{O}_b} \operatorname{tr}[(-\mathbf{V})\mathbf{O}_a\mathbf{T}\mathbf{O}_{b}^{T}],
\end{equation}
where $\mathbf{T}$ and $\mathbf{V}$ are two-body components of the state $\rho_{ab}$ and the Hamiltonian $H_{ab}$, respectively, i.e.,
\begin{align}
\mathbf{T}_{i,j} &:=  \tr[\rho_{ab}(\sigma_{a}^i \otimes \sigma_{b}^j)]. \\
\mathbf{V}_{i,j} &:=  \frac{1}{4}\tr[V_{ab}(\sigma_{a}^j \otimes \sigma_{b}^i)], 
\end{align}
and $\mathbf{O}_{a,b} \in \mathcal{SO}(d^2-1)$.
The optimization in \eqref{eq:optOTV} can be carried out analytically (details are reported in Appendix \ref{ProofLLMBound}) over the entire rotation group $\mathtt{SO}(d^2 - 1)$, establishing the following upper bounds on PE,
\begin{widetext}
\begin{equation}
\label{eq:LMMBound}
\mathcal{E}_{\rm P}(\rho_{ab}, H_{ab}) \leq 
\begin{cases}
\tr[\rho_{ab} V_{ab}] + (\vec \lambda^{|\mathbf{V}|}\cdot \vec \lambda^{|\mathbf{T}|}) & \text{if $\det(-\mathbf{V}\mathbf{T}) \geq 0$}, \\
\tr[\rho_{ab} V_{ab} ] + (\vec \lambda^{|\mathbf{V}|}\cdot \vec \lambda^{|\mathbf{T}|}-2\lambda_0^{|\mathbf{V}|}\lambda_0^{|\mathbf{T}|}) & \text{if $\det(-\mathbf{V}\mathbf{T}) < 0$},
\end{cases}
\end{equation}
\end{widetext}
where $\vec{\lambda}^{|\mathbf{M}|} = \{ \lambda_0^{|\mathbf{M}|} \leq \ldots \leq \lambda_{d^2-2}^{|\mathbf{M}|} \}$ denotes a non-decreasingly ordered set of singular values of a generic matrix $\mathbf{M}$, i.e., eigenvalues of $|\mathbf{M}| = \operatorname{tr}[\sqrt{\mathbf{M}^\dagger \mathbf{M}}]$. Since there is a one-to-one correspondence (up to a global phase) between qubit unitary transformations and 3D rotations due to the isomorphism $\mathtt{U}(2)/\mathtt{U}(1) \cong \mathtt{SO}(3)$, the upper bound \eqref{eq:LMMBound} is saturated in the case of a two-qubit system.
\subsection{Werner states}
\label{sec:werner}
Werner states $W$ are bipartite LMM states that are invariant under parallel unitary operations of form $U \otimes U$,
\begin{equation}
\label{W-invariance}
    (\mathsf{U} \otimes \mathsf{U})[W] = W,
\end{equation}
for any $U$ acting on the $d$-dimensional Hilbert space. Werner states provide an important example of states whose PE and local ergotropies coincide, i.e., for a bipartite system $AB$,
\begin{equation}
\label{WPEeqLE}
    \mathcal{E}_{\rm P}(W, H_{ab}) = \mathcal{E}_{\rm La}(W, H_{ab}) = \mathcal{E}_{\rm Lb}(W, H_{ab}).
\end{equation}
In order to see this, let us assume $\overline{U} \otimes \overline{V}$ to be an optimal parallel unitary that achieves the value of PE $\mathcal{E}_{\rm P}(W)$. Taking into account the invariance condition \eqref{W-invariance} with $U = \overline{V}$ and $U = \overline{U}$, we obtain
\begin{eqnarray}
    \nonumber (\overline{\mathsf{U}} \otimes \overline{\mathsf{V}})[W] &=& (\overline{\mathsf{U}}\overline{\mathsf{V}}^\dagger \otimes \mathbb{1})\Bigl[(\overline{\mathsf{V}} \otimes \overline{\mathsf{V}})[W]\Bigr] \\
    &=& (\overline{\mathsf{U}}\overline{\mathsf{V}}^\dagger \otimes \mathbb{1}) [W],
\end{eqnarray}
and
\begin{eqnarray}
    \nonumber (\overline{\mathsf{U}} \otimes \overline{\mathsf{V}})[W] &=& (\mathbb{1} \otimes \overline{\mathsf{V}}\overline{\mathsf{U}}^\dagger)\Bigl[(\overline{\mathsf{U}} \otimes \overline{\mathsf{U}})[W]\Bigr] \\
    &=& (\mathbb{1} \otimes \overline{\mathsf{V}}\overline{\mathsf{U}}^\dagger)[W],
\end{eqnarray}
so that the same amount of work can be extracted via a local unitary $\overline{U}^\dag \overline{V}$ on the single subsystem $A$ as well as $\overline{V}^\dag \overline{U}$ on the subsystem $B$, hence, proving the property \eqref{WPEeqLE}. 

As an example, let us consider a Werner state of two qubits, which can be written explicitly as
\begin{equation}\label{eq:WernerState}
      W_p := (1-p)\frac{\mathbb{1}_{ab}}{4}+p\ket{\psi_{max}}\bra{\psi_{max}}\,,
\end{equation}
with $p \in [0,1]$ and $\ket{ \psi_{max}}$ being any of the four Bell states $|\Phi^\pm\rangle = \frac{1}{2}(|00\rangle \pm |11\rangle)$ and $|\Psi^\pm\rangle = \frac{1}{2}(|01\rangle \pm |10\rangle)$. 
The state in \eqref{eq:WernerState} is characterized by $\mathbf{T} = - pO$, where $O \in \mathtt{SO}(3)$ \cite{2-Qubits}.
Therefore, applying the upper bound \eqref{eq:LMMBound}, which is saturated since a 2-qubit state is considered, we obtain PE (as well as local ergotropy \cite{LErgo}) of the Werner state \eqref{eq:WernerState},
\begin{widetext}
\begin{equation}
\label{WernerParallel}
\mathcal{E}_{\rm P}(W_p, H_{ab}) = 
\begin{cases}
p \big ( \bra{\psi_{max}}V_{ab}\ket{\psi_{max}} + \operatorname{tr}[|\mathbf{V}|] \big ) & \text{if $\det(\mathbf{V}) \geq 0$}, \\
p \big( \bra{\psi_{max}}V_{ab}\ket{\psi_{max}} + (\operatorname{tr}[|\mathbf{V}|] - 2\lambda_0^{|\mathbf{V}|}) \big ) & \text{if $\det(\mathbf{V}) < 0$}.
\end{cases}
\end{equation}    
\end{widetext}
In particular, for the antiferromagnetic Hamiltonian \eqref{eq:AntiferrH}, which is characterized by $\mathbf{V} = \omega \mathbb{1}$, it reads:
\begin{equation}
    \mathcal{E}_{\rm P}(W_p, H_{\mathrm{Ant}}) = 
\begin{cases}
4p\omega & \text{if $|\psi_{max}\rangle \in \{|\Phi^\pm\rangle, |\Psi^+\rangle\}$}, \\
0 & \text{if $|\psi_{max}\rangle =|\Psi^-\rangle$}.
\end{cases}
\end{equation}

\section{Numerical Approximation via Semidefinite Programming}
\label{sec:SDP}
\subsection{SDP method}
As specified in the previous section, the analytical upper bound on PE provided by Eq.\,\eqref{eq:LMMBound} is only valid when either
(i) the Hamiltonian has null local terms [Eq.\,\eqref{null-loc-ham}] and/or
(ii) the state is locally maximally mixed [Eq.\,\eqref{eq:LMM}].
We derive now a numerical approach allowing one to obtain another upper bound on PE, which is general, i.e. not restricted to requirements (i) and/or (ii).

Inspired by Ref.\,\cite{LErgo}, we first rely on semi-definite programming, the Choi-Jamio\l kowski isomorphism, and 
the relaxation from unitary to unital channels. 

Semi-definite programming is a subclass of convex optimization problems.
The quality of these approaches is two-fold.
First, 
local optima are also global optima for convex optimization problems.
Second, 
semi-definite programs are characterized by constraints in the form of linear matrix inequalities and for which efficient algorithms exist for its numerical solution \cite{Boyd_Vandenberghe_2004, Cavalcanti_Skrzypczyk_SDP}, with polynomial complexity in the number of constraints and on the size of the matrix.
However, in general, 
rephrasing the PE optimization problem
\eqref{eq:parallel-ergo-N} in terms of a semi-definite program (SDP), 
comes at the cost of enlarging the set of allowed operators, hence leading to an upper bound for PE.
Indeed, the original problem \eqref{eq:parallel-ergo-N} 
is not an SDP, because of the nonlinear constraints
[unitarity and tensor product structure in \eqref{setUp}]
concerning the quantum channel that should be applied on the system.
%

To construct a suitable SDP we exploit the 
Choi-Jamio\l kowski isomorphism \cite{Jamiolkowski_1972}.
The \textit{Choi state} $J_\Lambda$ of the channel $\Lambda$ is obtained as
\begin{equation}
J_\Lambda = \sum_{\mu,\nu} \,|\mu\rangle\langle\nu| \otimes \Lambda(|\mu\rangle\langle \nu|)\,, 
\end{equation}
and the action of the channel $\Lambda$ from
\begin{equation}
\Lambda(\rho) = \mathrm{Tr}_{\mathtt{in}}\left[ J_\Lambda\,\left(\rho^T \otimes \mathbb{1} \right) \right]\,,
\end{equation}
expressing the one-to-one correspondence between $\Lambda$ and $J_\Lambda$.
More specifically,
defining the subsystems' Hilbert spaces as 
$\mathcal{H}^{\mathtt{out}}_i$
and 
ancillary copies of them as $\mathcal{H}^{\mathtt{in}}_i$, with $i\in\{1,\dots,n\}$, 
$\Lambda$ is a channel acting from $\mathcal{L}(\otimes_{i=1}^n\,\mathcal{H}^{\mathtt{in}}_i)$ to $\mathcal{L}(\otimes_{i=1}^n\,\mathcal{H}^{\mathtt{out}}_i)$, and $J_\Lambda$ is an operator in the combined space $\mathcal{L}\left[\left(\otimes_{i=1}^n \mathcal{H}^{\mathtt{in}}_i\right) \otimes \left(\otimes_{i=1}^n \mathcal{H}^{\mathtt{out}}_i\right)\right]$.

The SDP we introduce reads as,
\begin{align}
    J^{\rm( opt)}_\Lambda=\argmin_{J_\Lambda} &\quad \mathrm{Tr}\left[J_\Lambda\, \left(\rho^T \otimes H\right)\right] \nonumber \\
    \textrm{s.t.}&\quad J_\Lambda \geq 0,\nonumber \\
    &\quad \mathrm{Tr}_{\mathtt{out}}[J_\Lambda] = \mathbb{1}_{\mathtt{in}}, \nonumber \\
    \forall i \in (1,\ldots,n) &\quad \mathrm{Tr}_{\mathtt{in}^{(i)}}[J_\Lambda] = \mathrm{Tr}_{\mathtt{in}^{(i)}, \mathtt{out}^{(i)}}[J_\Lambda] \otimes \frac{\mathbb{1}_{\mathtt{out}^{(i)}}}{d_i}, \nonumber \\
    &\quad J_\Lambda\,\in\,\mathsf{DPS}_k
    \label{eq:SDP_upper_bound}
\end{align}
and the corresponding upper bound for PE is given by
the rhs of the following inequality:
\begin{equation}
\label{sdp-bound-explicit}
    \mathcal{E}_{\rm P}(\rho_{ab}, H_{ab}) \leq 
\mathrm{Tr}[\rho\,H] - \mathrm{Tr}[J^{\rm( opt)}_\Lambda\, \left(\rho^T \otimes H\right)]\,.
\end{equation}
We describe now the main points of the method, leaving further details to Appendix \ref{app:approx_constraints}.

Positive semi-definiteness condition [first constraint in\,\eqref{eq:SDP_upper_bound}] and trace-preserving condition [second constraint in\,\eqref{eq:SDP_upper_bound}] ensure that the super-operator $\Lambda$ associated to $J_\lambda$ is a Completely Positive Trace-Preserving map. 

The remaining constraints 
are relaxations of the nonlinear constraints present in the original problem \eqref{eq:parallel-ergo-N}.
To start, for any given set of quantum channels $\mathcal{C}$ let us denote with 
\begin{eqnarray}
\label{conv-def}
    \textit{Conv}(\mathcal{C}):= \left \{ \underset{i}{\sum}  p_{i}\mathbf{\Phi}_{i}:  \underset{i}{\sum}  p_{i}=1 , p_{i}\geq 0 , \mathbf{\Phi}_{i}\in \mathcal{C} \right \} 
    \end{eqnarray}
    the convexification of $\mathcal{C}$.
Analogously to Ref.\,\cite{LErgo}, we first relax local unitarity with local unitality, that for the Choi state means imposing the constraints present in the second last line of \eqref{eq:SDP_upper_bound} (see Appendix \ref{app:approx_constraints} for details).
Actually, 
we have 
\begin{equation}
\label{inclusions-sets}
   \otimes_{i=1}^{n}\mathcal{U}_i \subseteq  
   \otimes_{i=1}^{n}Conv(\mathcal{U}_i)
   \subseteq
   \otimes_{i=1}^{n}\mathcal{U}^{\rm ni}_i
      \subseteq
Conv(\otimes_{i=1}^{n}\mathcal{U}^{\rm ni}_i)\,,
\end{equation}
i.e.,
enlarging the set at each step, we pass from the set $\otimes_{i=1}^{n}\mathcal{U}_i$ of tensor products of unitaries, 
to the set $\otimes_{i=1}^{n}Conv(\mathcal{U}_i)$ of tensor products of convex mixtures of unitaries, 
then to the set $\otimes_{i=1}^{n}\mathcal{U}^{\rm ni}_i$ of tensor products of unital maps,
and finally to the set $Conv(\otimes_{i=1}^{n}\mathcal{U}^{\rm ni}_i)$
of convex mixtures
of 
tensor products of unital maps.
Local unitality means that the channel is independently unital on each subsystem it acts upon, 
\begin{equation}
    \Lambda(\rho \otimes \mathbb{1}_i) = \rho' \otimes \mathbb{1}_i,
    \label{eq:def_local_unital}
\end{equation}
where $\rho$ is a density matrix in $\mathcal{B}(\otimes_{j \neq i} \mathcal{H}^j_{\mathtt{in}})$ and $\rho'$ the resulting state also acting on all but the i-th output Hilbert space.
Satisfying \eqref{eq:def_local_unital} for any $i\in\{1,2,\dots,n\}$ means $\Lambda \in Conv(\otimes_{i=1}^{n}\mathcal{U}^{\rm ni}_i)$ and is encoded in the constraints present in the second-to-last line of \eqref{eq:SDP_upper_bound} (see Lemma\,\ref{lemma-loc-unital} for a detailed proof).

It should be noted that both passing from 
$\otimes_{i=1}^{n}\mathcal{U}_i$ to $  
\otimes_{i=1}^{n}Conv(\mathcal{U}_i)$
(see Lemma\,\ref{lemma-convconv})
and passing 
from 
$\otimes_{i=1}^{n}\mathcal{U}^{\rm ni}_i$ to  
$Conv(\otimes_{i=1}^{n}\mathcal{U}^{\rm ni}_i)$
doesn't imply an approximation, as the optimum value does not change when enlarging the set in these ways.
Indeed,
the objective function
is linear in the Choi state and, therefore, we can retain as extreme points products of Choi states
(see Appendix \ref{clarifications} for extended proofs).
Instead, passing from
$\otimes_{i=1}^{n}Conv(\mathcal{U}_i)$
to 
$\otimes_{i=1}^{n}\mathcal{U}^{\rm ni}_i$
implies an approximation in the general case.
Interestingly, in the particular case where all the subsystems are qubits, the latter is not an approximation either \cite{Mendl_2009}.

Finally, the constraints concerning the last line of \eqref{eq:SDP_upper_bound} correspond to the $k$-th level of the Doherty-Parrilo-Spedalieri (DPS) hierarchy for separable states \cite{DPS_2004}, based on symmetric extensions of the state with positive partial transpose (PPT).
These constraints are aimed at imposing the tensor product structure of the channel $\Lambda=\otimes_{i=1}^n {\Lambda_i}$, implying the Choi state 
is a tensor product of Choi states:
$J_{\Lambda}=\otimes_{i=1}^n J_{\Lambda_i}$,
each $J_{\Lambda_i}$ acting on subsystem $\mathcal{H}^{\mathtt{in}}_i \otimes \mathcal{H}^{\mathtt{out}}_i$.
%
Since ensuring such a structure is also usually a difficult task computationally \cite{EntanglementNPHard}, a double relaxation is employed on $J_\Lambda$. 
Similarly to before, optimizing over convex mixtures of product Choi states is equivalent and means imposing separability condition between all the pairs of different subsystems. 
Then the method of Ref.\,\cite{DPS_2004} allows one to approximate from the outside the target convex set with the converging hierarchy of simpler tests. 
In particular, although higher levels of the DPS hierarchy are computationally too resource demanding, already the \enquote{level zero} of just requiring PPT for $J_\Lambda$ results usually in a good approximation to the target set in practical cases.
{Since the hierarchy of approximations converges to the actual set of separable states, the higher the level of the hierarchy implemented, the closer should be the map to an actual combination of products of unital maps.

In general, the whole scheme provides an upper bound for PE that is not guaranteed to converge to the exact value even for high level $k$.
Such bound can happen to be even looser than the one coming from approximating parallel ergotropy with global ergotropy.
In this sense, a practical way to control the accuracy of the result is to monitor the purity of the Choi state resulting from the optimization \eqref{eq:SDP_upper_bound}: the greater the purity, the smaller the price we have paid by relaxing convex combinations of local unitaries with local unitals [intermediate step in \eqref{inclusions-sets}].
Notably, for increasing DPS level $k$, convergence to the actual value of PE is instead ensured 
when the subsystems are single qubits.
Indeed, as in the latter case the equivalence 
$\otimes_{i=1}^{n}Conv(\mathcal{U}_i)
   =
   \otimes_{i=1}^{n}\mathcal{U}^{\rm ni}_i$
holds true,
the SDP method is asymptotically exact.
 
However, for brevity, in what follows we shall only test
the SDP method
for $n=2$ subsystems and
 level $k=0$ of the DPS hierarchy, i.e. using 
\begin{equation}
\label{dps0}
J_\Lambda \in {\rm DPS}_0
\iff
    J^{T_i}_{\Lambda} \geq 0\quad \forall i \in \{1,2,\dots,n\}\,,
\end{equation}
 meaning that the partial transposition on the space of each subsystem $(\mathtt{in}^{(i)}, \mathtt{out}^{(i)})$ should preserve positive semi-definiteness.

\subsection{Applications}
\label{sec:sdp-appl}

We present now some test cases to compare the different methods for estimating PE. For direct comparison, we restrict the analysis to cases where the analytical upper bound \eqref{eq:LMMBound} is applicable, first on the Werner state, then on generic states but using Hamiltonians with no local terms. 

Results for the first case are presented in Fig.\,\ref{fig:algo_comparison_Werner}, where a range of upper bounds for PE--corresponding to global ergotropy, the SDP approximation from Eq.\,\eqref{eq:SDP_upper_bound} under
$0$-th level of the DPS hierarchy \eqref{dps0},
and the analytical bound of \eqref{eq:LMMBound}--is calculated for different visibilities $p$ of the two-qubit Werner state [Eq. \eqref{eq:WernerState}], under the action of the Hamiltonian
\begin{equation}
\label{flip-flop}
 H = \frac{1}{2}\omega_a\,\sigma^z_a + \frac{1}{2}\omega_b\,\sigma^z_b + \frac{1}{2}g\,(\sigma^x_a \otimes \sigma^x_b + \sigma^y_a \otimes \sigma^y_b)\,.
\end{equation}
Also a lower bound is computed through direct optimization of local unitaries (see Appendix\,\ref{app:DirectUnitary} for details). 
Coincidence of the latter with the bound of \eqref{eq:LMMBound} confirms that the estimate is exact, as expected.
Instead, the SDP method shows at least an improvement in comparison to upper bounding PE using global ergotropy,
although not closing the gap with the actual PE value. 

\begin{figure}[t]
\includegraphics[width=0.8\linewidth]{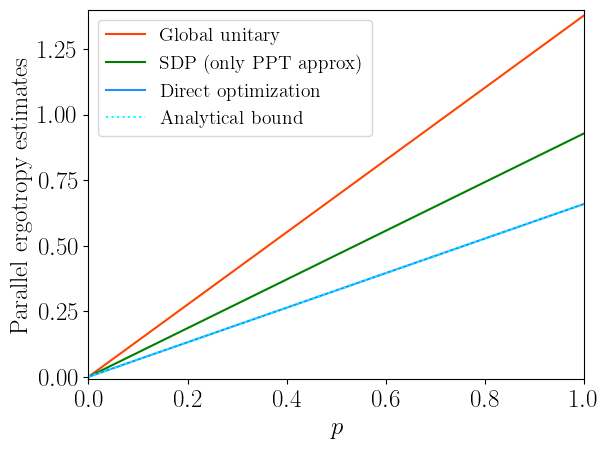}
    \caption{Comparison among different estimates of PE for the two-qubit Werner state and the Hamiltonian \eqref{flip-flop}.
    We set the Hamiltonian parameters as
    $\omega_a = 1.0,\,\omega_b = 1.1,\,g = 0.33$
    and plot the results as function of the visibility $p$ of the Werner state.
The analytical bound of Eq.~\eqref{eq:LMMBound} is exact, which is observed in the coincidence with the direct optimization of the local unitaries (bottom curve). 
    As for the SDP approximation of Eq. \eqref{eq:SDP_upper_bound}
    under $0$-th level of the DPS hierarchy [Eq.\,\eqref{dps0}]
    (middle curve), it represents an improvement at least with respect to estimating the PE via global ergotropy (top curve), although it is insufficient to close the gap with the actual value.
    }
    \label{fig:algo_comparison_Werner}
\end{figure}

\begin{figure}[b]
\includegraphics[width=0.8\linewidth]{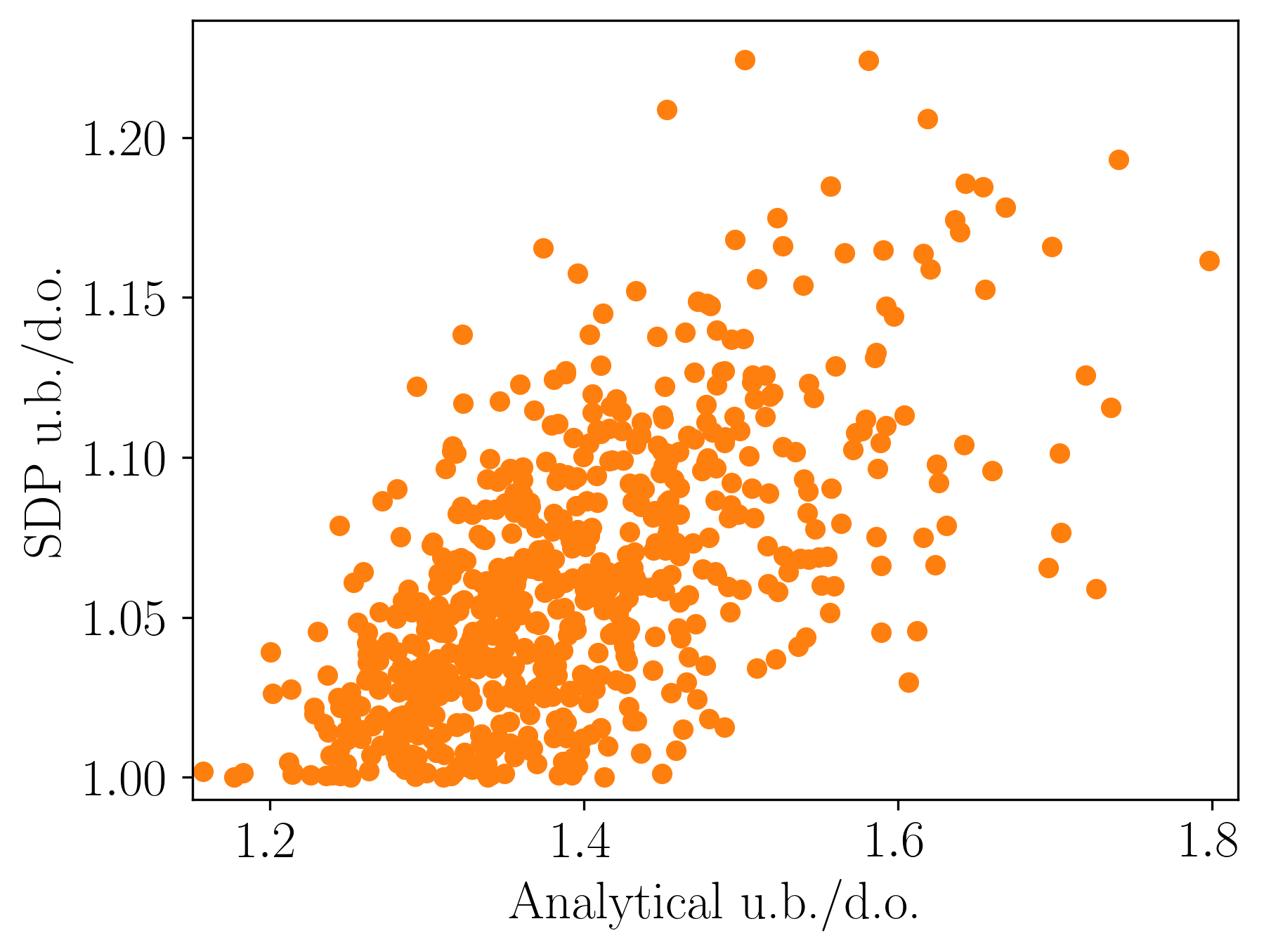}
    \caption{Comparison between the analytical upper bound (u.b.) of Eq. \eqref{eq:LMMBound} and the SDP upper bound of \eqref{sdp-bound-explicit} under $0$-th level of the DPS hierarchy. 
    Values are obtained for random two-qutrit states and random Hamiltonians with no local terms as detailed in Sec.\,\ref{sec:sdp-appl}. 
    Direct optimization of local unitaries is used to lower bound the actual PE. 
    Each point identifies a ratio between the two upper bounds and the lower bound.
    Abscissa:
    ratio between the the analytical u.b. and the direct optimization lower bound (d.o.).
    Ordinate:
    ratio between the SDP u.b. and the direct optimization lower bound. 
    Here the SDP technique provides an advantage in estimating PE in many cases, even sometimes attaining the exact value.}
    \label{fig:algo_comparison_d33}
\end{figure}

Applying the SDP technique \eqref{eq:SDP_upper_bound} to higher-dimensional systems introduces computational complications since the dimensions of $J_\Lambda$ grow as the square of the local dimensions. 
It is still possible, however, to perform tests for two-qutrit and two-ququart systems (still using just the  $0$-th level of the DPS hierarchy, Eq.\,\eqref{dps0}). 
In Fig.\,\ref{fig:algo_comparison_d33}, a comparison between the SDP approximation and the analytical bound of \eqref{eq:LMMBound} is presented for two-qutrit states in terms of the ratio that each method produces with respect to the same lower bound for PE, obtained with direct optimization of the local unitaries. 
For each point in the graph, a random state and a random Hamiltonian are produced and all estimates are computed on the produced pair. States are obtained as $\rho_{ab} = X\,X^\dagger/\operatorname{tr}[XX^\dagger]$, with entries on $X$ normally sampled \cite{Ginibre}, and Hamiltonians are produced as $H = v_{ji}\,\sigma^i_a \otimes \sigma^j_b$, with $v_{ji}$ normally distributed, $\sigma^i_{a/b}$ corresponding to a generalized (orthonormal) Pauli basis. 
For the particular sampling used, the SDP method performs better in many cases, even closing the gap with the lower bound for some points.

\section{Parallel capacity as an entanglement witness} 
\label{sec:ParallelCap}
Now we shift the focus to another quantity characterizing the potential of quantum battery to store work, which has been introduced in Section \ref{sec:preliminaries}, quantum battery capacity \cite{BatteryCap}. We analyze it under restriction to local unitary cycles, introducing thereby the quantity of \textit{parallel capacity}, and demonstrate its role as an entanglement witness. 
\subsection{Parallel capacity}
We start by formalizing the definition of parallel capacity. Similarly to the definition of PE \eqref{eq:parallel-ergo-N} of a multipartite quantum battery via restriction of work extracting unitaries to local ones, we define the parallel capacity of a $N$-partite quantum system equipped with a Hamiltonian $H$ as the maximum energy gap that we can obtain between two states which are \textit{connected} by parallel unitary operations,

\begin{equation}\label{eq:ParCap}
    \mathcal{C}_{\rm P}(\rho, H):= \max_{U_{\mathrm{P}} \in \mathcal{U}_{\mathrm{P}}^N} \Bigl( \tr  [U_{\mathrm{P}} \rho U_{\mathrm{P}}^\dag H]\Bigr) - \min_{\tilde{U}_{\mathrm{P}} \in \mathcal{U}_{\mathrm{P}}^N} \Bigl( \tr  [ \tilde{U}_{\mathrm{P}} \rho \tilde{U}_{\mathrm{P}}^\dag H] \Bigr),
\end{equation}
where $\mathcal{U}_{\rm P}^N$ denotes the set of {\it $N$-parallel unitaries}, i.e., unitaries of form $U_{\mathrm{P}} = \otimes_{i=1}^N U_i$, with $U_i$ acting on $i$-th subsystem. Notice that the parallel capacity \eqref{eq:ParCap} is bounded from below by the corresponding PE \eqref{eq:parallel-ergo-N}, because $ \max_{U_{\mathrm{P}} \in \mathcal{U}_{\mathrm{P}}^N}\;\; \tr  [ H  U_{\mathrm{P}} \rho U_{\mathrm{P}}^\dag] \geq \tr  [ H \rho]$. On the other hand, it is naturally bounded from above by the maximal energy gap $\Vert H \Vert_{\infty} := E_{\rm max} - E_{0}$, which is the difference of the highest $E_{\rm max}$ and lowest $E_{0}$ eigenvalues of $H$. Note that, with this definition, we are making a slight abuse of notation here and the symbol $\Vert . \Vert_\infty$ should be understood as the operator norm for the shifted operator $H$, where the ground state is placed with zero energy.
Therefore,
\begin{equation}\label{eq:BoundsParCap}
    \mathcal{E}_{\rm P}(\rho, H) \leq \mathcal{C}_{\rm P}(\rho, H) \leq \Vert H \Vert_{\infty}.
\end{equation}
\begin{widetext}
In what follows, we focus on bipartite interacting quantum batteries equipped with the Hamiltonian \eqref{hamilt} and described by the state $\rho_{ab} \in D(\mathcal{H}_a \otimes \mathcal{H}_b)$. In particular, for a class of bipartite setups characterized by a Hamiltonian without local terms and/or LMM states, which have been analyzed in Section \ref{sec:bounds} in the context of PE, parallel capacity is bounded from above by

\begin{equation}\label{eq:ParCapUpperBound}
     \mathcal{C}_{\rm P}(\rho_{ab}, H_{ab}) \leq \begin{cases}
      2((\vec \lambda^{|\mathbf{V}|}\cdot \vec \lambda^{|{\mathbf T}|}) - \lambda_0^{|\mathbf{V}|} \lambda_0^{|\mathbf{T}|}) & \text{if $d$ is even.} \\
      2 (\Vec{\lambda}^{|\mathbf{V}|} \cdot \Vec{\lambda}^{|\mathbf{T}|}) & \text{if $d$ is odd and $\operatorname{det}(-\mathbf{V}\mathbf{T}) \geq 0$,}\\
      2((\vec \lambda^{|\mathbf{V}|}\cdot \vec \lambda^{|{\mathbf T}|}) - 2\lambda_0^{|\mathbf{V}|} \lambda_0^{|\mathbf{T}|}) & \text{if $d$ is odd and $\operatorname{det}(-\mathbf{V}\mathbf{T}) < 0$,}
    \end{cases}
\end{equation}
\end{widetext}
where equality holds for $d=2$ (see Appendix \ref{app:sec:PCbound}).

Imai \textit{et al} \cite{WorkFluctuations} showed that it is possible to detect entanglement in a bipartite quantum battery statistically by probing a large variance of average work extracted under Haar-random parallel unitaries. 
We proceed by demonstrating an alternative entanglement criterion via detection of a large parallel capacity \eqref{eq:ParCap} of the working medium.

\subsection{Maximum parallel capacity of separable states and entanglement gap energy}

It is known that entanglement of a pure bipartite quantum state $\ket{\Psi_{ab}}$ can be captured by considering its Schmidt decomposition $\ket{\psi_{ab}} = \sum_{i=1}^{r_S(\Psi_{ab})} \sqrt{p_i} \ket{\phi_{i,a}} \otimes \ket{\psi_{i,b}}$, where $\sum_i p_i = 1$ and $\{ \ket{\phi_{i,a}}\}_i$ and $\{ \ket{\psi_{i,b}}\}_i$ are certain orthonormal bases in $\mathcal{H}_a$ and $\mathcal{H}_b$, respectively. In turn, the state $\ket{\Psi_{ab}}$ is entangled if and only if its Schmidt rank $r_S(\Psi_{ab}) > 1$. For a mixed state $\rho_{ab}$, the Schmidt rank can be generalized to Schmidt number that captures the pure states with the highest Schmidt rank from all its possible ensemble realizations $\rho_{ab} = \sum_i p_i |\Psi_{i, ab}\rangle\langle \Psi_{i, ab} |$ \cite{GenSN, WorkFluctuations},
\begin{equation}
    \tilde{r}_S(\rho_{ab}) = \operatorname{inf}\limits_{\mathcal{D}(\rho_{ab})} \operatorname{max}\limits_{\Psi_{i, ab}} r_S(\Psi_{i, ab}),
\end{equation}
where $\mathcal{D}(\rho_{ab}) = \{ p_i, |\Psi_{i, ab}\rangle\}$ is the set of ensemble realizations of $\rho_{ab}$.

Bipartite density matrices $\rho_{ab}$ with Schmidt number $\tilde{r}_S(\rho_{ab}) = k$ establish a hierarchy of subsets $\mathcal{S}^k \subset D(\mathcal{H}_a \otimes \mathcal{H}_b)$ that are convex and satisfy $ \mathcal{S}^{l}\subseteq \mathcal{S}^{k} \iff l\leq k$. Conversely, we say that $\rho_{ab}$ has Schmidt number $\tilde{r}_S(\rho_{ab})=k$ if $\rho_{ab} \in \mathcal{S}^{k} \backslash \mathcal{S}^{k-1}$. Notice that $ \mathcal{S}^1$ coincides with the set of separable states. Therefore, $\tilde{r}_S(\rho_{ab})> 1 $ implies that $\rho_{ab}$ is entangled, whereas higher Schmidt number $\tilde{r}_S(\rho_{ab})$ implies stronger entanglement in $\rho_{ab}$ \cite{GenSN0, GenSN1}. In order to connect this observation to parallel capacity of $\rho_{ab}$, we define the \textit{maximum parallel capacity} of states belonging to $\mathcal{S}^{k}$ as
\begin{equation}
    \mathcal{C}_{\rm P}^{k}(H_{ab}):=\max_{\rho_{ab} \in \mathcal{S}^{k}} \mathcal{C}_{\rm P}(\rho_{ab},H_{ab})\,.
\end{equation} 
Consequently, any state $\rho_{ab}$ such that $\mathcal{C}_{\rm P}(\rho_{ab}, H_{ab}) > \mathcal{C}_{\rm P}^{k}(H_{ab})$ necessarily has Schmidt number higher than $k$. This suggests that $\mathcal{C}_{\rm P}^{1}$ can be used to check whether a state has Schmidt number $k > 1$ and, hence, is entangled. In the following theorem, we show that it is indeed strictly connected to the entanglement gap energy \cite{EntaglementGap}, which is a known entanglement witness and defined as the energy difference between the separable state with minimum energy and the ground state energy,
\begin{equation}\label{eq:EntGap}
    \Delta E_{\mathrm{Sep}}(H_{ab}) = \min_{\rho \in \mathcal{S}^1} \operatorname{Tr}[\rho H_{ab}] - E_0,
\end{equation}
where $E_0$ is the energy of the ground state of $H_{ab}$.

\begin{theorem}\label{thm:SepCap}
Given a bipartite Hamiltonian $H_{ab}$, the corresponding maximum parallel capacity of separable states is given by
\begin{equation}
\label{CP1sep}
    \mathcal{C}_{\rm P}^{1}(H_{ab})=\Vert H_{ab}\Vert_{\infty}-(\Delta E_{\mathrm{Sep}}(H_{ab})+\Delta E_{\mathrm{Sep}}(-H_{ab})).
\end{equation}
\end{theorem}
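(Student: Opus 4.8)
The plan is to prove the equality \eqref{CP1sep} by establishing the two inequalities separately: the ``$\leq$'' direction is essentially immediate from invariance of the separable set under parallel unitaries, while the ``$\geq$'' direction requires exhibiting a single separable state whose parallel capacity already saturates the right-hand side.

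For the upper bound I would use the fact that the set $\mathcal{S}^1$ of separable states is closed under parallel unitary operations: if $\rho_{ab}\in\mathcal{S}^1$ and $U_{\rm P}=U_a\otimes U_b$, then $U_{\rm P}\rho_{ab}U_{\rm P}^\dagger\in\mathcal{S}^1$. Hence for every separable $\rho_{ab}$ one has $\max_{U_{\rm P}}\tr[U_{\rm P}\rho_{ab}U_{\rm P}^\dagger H_{ab}]\leq\max_{\sigma\in\mathcal{S}^1}\tr[\sigma H_{ab}]$ and, symmetrically, $\min_{\tilde U_{\rm P}}\tr[\tilde U_{\rm P}\rho_{ab}\tilde U_{\rm P}^\dagger H_{ab}]\geq\min_{\sigma\in\mathcal{S}^1}\tr[\sigma H_{ab}]$. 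Subtracting and maximizing over $\rho_{ab}\in\mathcal{S}^1$ gives $\mathcal{C}_{\rm P}^{1}(H_{ab})\leq\max_{\sigma\in\mathcal{S}^1}\tr[\sigma H_{ab}]-\min_{\sigma\in\mathcal{S}^1}\tr[\sigma H_{ab}]$. It then remains to rewrite this difference through the entanglement gap energies: by definition $\min_{\sigma\in\mathcal{S}^1}\tr[\sigma H_{ab}]=E_0+\Delta E_{\rm Sep}(H_{ab})$, and applying the same definition to $-H_{ab}$ yields $\max_{\sigma\in\mathcal{S}^1}\tr[\sigma H_{ab}]=E_{\rm max}-\Delta E_{\rm Sep}(-H_{ab})$, where $E_{\rm max}$ and $E_0$ are the largest and smallest eigenvalues of $H_{ab}$. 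Since $\Vert H_{ab}\Vert_\infty=E_{\rm max}-E_0$, the bound becomes exactly $\Vert H_{ab}\Vert_\infty-(\Delta E_{\rm Sep}(H_{ab})+\Delta E_{\rm Sep}(-H_{ab}))$.

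For the matching lower bound I would produce a witness state. Because $\mathcal{S}^1$ is convex and compact and $\sigma\mapsto\tr[\sigma H_{ab}]$ is linear, both $\max_{\sigma\in\mathcal{S}^1}\tr[\sigma H_{ab}]$ and $\min_{\sigma\in\mathcal{S}^1}\tr[\sigma H_{ab}]$ are attained at extreme points of $\mathcal{S}^1$, i.e.\ at pure product states, say $|\alpha_+\beta_+\rangle$ and $|\alpha_-\beta_-\rangle$. The key observation is that parallel unitaries act transitively on pure product states: for any two of them there exist local unitaries $U_a,U_b$ with $(U_a\otimes U_b)|\alpha_1\beta_1\rangle=|\alpha_2\beta_2\rangle$. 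Consequently, choosing $\rho_0=|\alpha_+\beta_+\rangle\langle\alpha_+\beta_+|$ (indeed any pure product state works), the orbit $\{U_{\rm P}\rho_0 U_{\rm P}^\dagger\}$ coincides with the full set of pure product states, so $\max_{U_{\rm P}}\tr[U_{\rm P}\rho_0 U_{\rm P}^\dagger H_{ab}]=\max_{\sigma\in\mathcal{S}^1}\tr[\sigma H_{ab}]$ and $\min_{\tilde U_{\rm P}}\tr[\tilde U_{\rm P}\rho_0\tilde U_{\rm P}^\dagger H_{ab}]=\min_{\sigma\in\mathcal{S}^1}\tr[\sigma H_{ab}]$. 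Therefore $\mathcal{C}_{\rm P}(\rho_0,H_{ab})$ equals the difference computed above, and since $\rho_0\in\mathcal{S}^1$ this forces $\mathcal{C}_{\rm P}^{1}(H_{ab})\geq\Vert H_{ab}\Vert_\infty-(\Delta E_{\rm Sep}(H_{ab})+\Delta E_{\rm Sep}(-H_{ab}))$. Combined with the upper bound this gives \eqref{CP1sep}. (As a byproduct, $\mathcal{C}_{\rm P}$ is constant on the whole set of pure product states, which is why the maximum over $\mathcal{S}^1$ is achieved there.)

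I expect no individual step to be technically hard; the one genuinely substantive point is recognizing that a single pure product state realizes simultaneously both the parallel-maximal and the parallel-minimal energy over all separable states, so the two-sided optimization defining $\mathcal{C}_{\rm P}$ imposes no trade-off. A secondary point needing a little care is the bookkeeping with the two entanglement gap energies under $H_{ab}\mapsto-H_{ab}$, in particular applying the ``ground-state-at-zero'' convention behind $\Vert H_{ab}\Vert_\infty$ consistently.
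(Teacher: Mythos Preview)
Your proposal is correct and follows essentially the same approach as the paper: both arguments hinge on the transitivity of parallel unitaries on pure product states (so that the parallel-capacity orbit of any pure product state spans all pure product states) and on rewriting the resulting energy range via the two entanglement-gap energies. The only cosmetic difference is that you organize the argument as two inequalities using closure of $\mathcal{S}^1$ under parallel unitaries for the upper bound, whereas the paper first invokes convexity of $\mathcal{C}_{\rm P}$ in $\rho_{ab}$ to reduce to pure separable states and then computes directly.
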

\begin{proof}
First, it is immediate to verify that the parallel capacity \eqref{eq:ParCap} (hence, also the maximal parallel capacities $\mathcal{C}_{\rm P}^k(H_{ab})$) is convex in $\rho_{ab}$. Therefore, the maximum in $\mathcal{C}_{\rm P}^1(H_{ab})$ can be always achieved by a pure separable state, and the optimization can be performed over product states $\ket{\Psi} = \ket{\phi}_a \otimes \ket{\psi}_b$,
\begin{equation}
    \mathcal{C}_{\rm P}^{1}(H_{ab}) = \max_{\ket{\Psi} \in \mathcal{S}^1} \mathcal{C}_{\rm P}(\ket{\Psi}\bra{\Psi}, H_{ab})\,,
\end{equation}
On the other hand, for any pair of product states $\ket{\Psi}$ and $\ket{\tilde{\Psi}}$, it is possible to find local unitaries $U_a \otimes U_b$ that transform them into each other, so that $\ket{\tilde{\Psi}} = (U_a \otimes U_b) \ket{\Psi}$. Therefore, the maximization over local unitaries in the parallel capacity $\mathcal{C}_{\rm P}(\ket{\Psi}\bra{\Psi}, H_{ab})$ can be substituted by maximization over the set of pure separable states. Moreover, any product state $\ket{\Psi}$ has the same parallel capacity $\mathcal{C}_{\rm P}(\ket{\Psi}\bra{\Psi}, H_{ab})$, therefore,
\begin{equation}\label{eq:thm:ParCap1}
    \mathcal{C}_{\rm P}^1(H_{ab}) = \max_{|\Psi\rangle \in \mathcal{S}^1} \bra{\Psi}H_{ab}\ket{\Psi} - \min_{|\Psi\rangle \in \mathcal{S}^1} \bra{\Psi}H_{ab}\ket{\Psi}.
\end{equation} 
Taking into account that entanglement gap energy \eqref{eq:EntGap} can be always achieved by a pure separable state as well, we can connect the latter term in \eqref{eq:thm:ParCap1} to it as
\begin{equation}\label{eq:thm:minSep}
    \min_{\ket{\Psi} \in \mathcal{S}^1 } \bra{\Psi}H_{ab}\ket{\Psi} = \Delta E_{\mathrm{Sep}}(H_{ab}) + E_0.
\end{equation}
On the other hand, the maximization in \eqref{eq:thm:ParCap1} can be substituted by minimization with respect to the Hamiltonian with the opposite sign,
\begin{equation}
    \max_{\ket{\Psi} \in \mathcal{S}^1 }\; \bra{\Psi}H_{ab}\ket{\Psi} = - \min_{\ket{\Psi} \in \mathcal{S}^1 }\; \bra{\Psi}(-H_{ab})\ket{\Psi},
\end{equation}
and, hence, connected to the corresponding entanglement gap energy as
\begin{equation}\label{eq:thm:maxSep}
    \max_{\ket{\Psi} \in \mathcal{S}^1 }\; \bra{\Psi}H_{ab}\ket{\Psi} = - \Delta E_{\mathrm{Sep}}(-H_{ab}) + E_{\mathrm{max}},
\end{equation}
where it is taken into account that the ground state energy of $-H_{ab}$ is equal to $-E_{\mathrm{max}}$, the maximal eigenvalue of $H_{ab}$ taken with the opposite sign. Plugging in \eqref{eq:thm:minSep} and \eqref{eq:thm:maxSep} into \eqref{eq:thm:ParCap1}, we obtain \eqref{CP1sep}, hence, proving the thesis.

\end{proof}

We remark that there exist efficient numerical techniques to compute $\Delta E_{\mathrm{Sep}}$  \cite{EntaglementGap}, making the computation of $\mathcal{C}_{\rm P}^{1}(H_{ab})$ efficient too. 
As a corollary, we get that $\mathcal{C}_{\rm P}^{1}(H_{ab})=\Vert H_{ab}\Vert_{\infty}$
iff both the ground state energy manifold and the maximum energy manifold contain at least a separable state. 
\subsection{Entanglement witness} \label{EntaglementWitness}
Theorem \ref{thm:SepCap} reveals that parallel capacity of a state provides an entanglement witness under suitable choice of Hamiltonian $H_{ab}$. Performing optimization of the parallel capacity over all states of the battery,
\begin{equation}
   \mathcal{C}_{\rm P}(H_{ab}):= \max_{\rho_{ab} } \;\; \mathcal{C}_{\rm P}(\rho_{ab},H_{ab})\,,
\end{equation}
we formulate the following criterion.
\begin{observation}\label{Criterion}
Given a Hamiltonian $H_{ab}$ such that
\begin{equation}\label{eq:CondHamCrit}
    \mathcal{C}_{\rm P}(H_{ab}) > \Vert H_{ab}\Vert_{\infty}-\big (\Delta E_{\mathrm{Sep}}(H_{ab})+\Delta E_{\mathrm{Sep}}(-H_{ab})\big ),
\end{equation} 
fulfillment of condition
\begin{equation}\label{eq:ParCapCrit}
    \mathcal{C}_{\rm P}(H_{ab}, \rho_{ab}) > \Vert H_{ab}\Vert_{\infty}-\big (\Delta E_{\mathrm{Sep}}(H_{ab})+\Delta E_{\mathrm{Sep}}(-H_{ab})\big ),
\end{equation}
witnesses entanglement of $\rho_{ab}$.
\end{observation}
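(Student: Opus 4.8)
The plan is to combine Theorem \ref{thm:SepCap} with the convexity and monotonicity structure already established for the parallel capacity. First I would recall that by Theorem \ref{thm:SepCap} the right-hand side of both \eqref{eq:CondHamCrit} and \eqref{eq:ParCapCrit} is precisely $\mathcal{C}_{\rm P}^{1}(H_{ab})$, the maximum parallel capacity attainable by separable states. Thus condition \eqref{eq:CondHamCrit} simply states that the Hamiltonian $H_{ab}$ is \emph{non-trivial} as a witness, in the sense that there exists at least one (necessarily entangled) state whose parallel capacity exceeds $\mathcal{C}_{\rm P}^{1}(H_{ab})$; otherwise the criterion would be vacuous.

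The core of the argument is the contrapositive. Suppose $\rho_{ab}$ is separable, i.e.\ $\rho_{ab} \in \mathcal{S}^1$. Then by the very definition of $\mathcal{C}_{\rm P}^{1}(H_{ab})$ as a maximum over $\mathcal{S}^1$ we have $\mathcal{C}_{\rm P}(\rho_{ab}, H_{ab}) \leq \mathcal{C}_{\rm P}^{1}(H_{ab})$, which by Theorem \ref{thm:SepCap} equals $\Vert H_{ab}\Vert_{\infty}-\big(\Delta E_{\mathrm{Sep}}(H_{ab})+\Delta E_{\mathrm{Sep}}(-H_{ab})\big)$. Hence \eqref{eq:ParCapCrit} fails for every separable state. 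Equivalently, any $\rho_{ab}$ satisfying \eqref{eq:ParCapCrit} cannot lie in $\mathcal{S}^1$, and is therefore entangled. This is really all that is needed: the observation is essentially a restatement of Theorem \ref{thm:SepCap} phrased as a one-sided test, with \eqref{eq:CondHamCrit} added only to guarantee that the test is not empty.

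There is no serious obstacle here — the main point is bookkeeping: verifying that both inequalities share exactly the expression $\mathcal{C}_{\rm P}^{1}(H_{ab})$ on their right-hand side, so that Theorem \ref{thm:SepCap} can be invoked verbatim. The only subtlety worth spelling out is why it suffices to compare against $\mathcal{C}_{\rm P}^{1}$ rather than against the capacity of the particular separable state at hand: this follows because $\mathcal{C}_{\rm P}^{1}(H_{ab})$ is by construction an upper bound on $\mathcal{C}_{\rm P}(\rho_{ab},H_{ab})$ uniformly over all $\rho_{ab}\in\mathcal{S}^1$. One may optionally remark that condition \eqref{eq:CondHamCrit}, via the corollary to Theorem \ref{thm:SepCap}, can be read as requiring that $H_{ab}$ does \emph{not} have both its ground-state manifold and its maximal-energy manifold populated by separable states — otherwise $\mathcal{C}_{\rm P}^{1}(H_{ab})=\Vert H_{ab}\Vert_\infty$ and no state could ever violate the bound.
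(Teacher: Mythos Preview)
Your proposal is correct and matches the paper's intended reasoning: the Observation is presented in the paper as an immediate consequence of Theorem~\ref{thm:SepCap}, and your contrapositive argument---identifying the right-hand side of \eqref{eq:ParCapCrit} with $\mathcal{C}_{\rm P}^{1}(H_{ab})$ and then using the definition of $\mathcal{C}_{\rm P}^{1}$ as a supremum over $\mathcal{S}^1$---is exactly how this consequence is meant to be read. The additional remark on the role of \eqref{eq:CondHamCrit} as a non-triviality condition, tied to the corollary after Theorem~\ref{thm:SepCap}, is also in line with the paper.
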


As a paradigmatic example, we consider the antiferromagnetic Hamiltonian \eqref{eq:AntiferrH} for 2 qubits, which can be given as
\begin{equation}
    H_{\mathrm{Ant}} = \omega(  \mathbb{1}_{ab}-4\ket{\psi^{-}}\bra{\psi^{-}}),
\end{equation}
where $\ket{\psi^{-}} := \frac{1}{\sqrt{2}}(\ket{01}-\ket{10})$ is the singlet state. First, let us show that $H_{\mathrm{Ant}}$ is a suitable Hamiltonian that fulfills \eqref{eq:CondHamCrit}. In order to calculate $\mathcal{C}_{\rm P}(H_{\mathrm{Ant}})$, we recall that, as a parallel capacity, it features the bounds \eqref{eq:BoundsParCap}, where the lower bound is true for any state $\rho_{ab}$. For the upper bound $\Vert H_{\mathrm{Ant}}\Vert_\infty$, we find that $\ket{\psi^{-}}$ is the ground state of $H_{\mathrm{Ant}}$ with associated energy $E_{0}=-3 \omega$, while the orthogonal subspace to $\ket{\psi^{-}} $ is the maximum energy manifold with associated energy $E_{\max}= \omega$. Therefore,
\begin{equation}
 \Vert H_{\mathrm{Ant}}\Vert_\infty = 4 \omega.
\end{equation}
On the other hand, taking $\rho_{ab} = \ket{\Psi^+}\bra{\Psi^+}$, it is straightforward to show that its ergotropy
\begin{equation}
\mathcal{E}_{\mathrm{P}}(\ket{\Psi^+}\bra{\Psi^+}, H_{\rm Ant}) = 4\omega,
\end{equation}
providing the lower bound on the maximal parallel capacity. Hence, we conclude that $\mathcal{C}_{\rm P}(H_{\mathrm{Ant}}) = 4\omega$. 
In order to check that it satisfies \eqref{eq:CondHamCrit}, we notice that the maximum energy manifold can be spanned by the basis $\{\ket{00}, \ket{11}, \ket{\Psi^+}\}$. This implies $\Delta E_{\mathrm{Sep}}(-H_{\mathrm{Ant}})=0$, because, e.g., the separable state $\ket{00}$ is a ground state for $-H_{\mathrm{Ant}}$. On the other hand, the antiferromagnetic Hamiltonian is known to provide the largest possible entanglement gap for 2 qubits given by \cite{EntaglementGap}
\begin{equation}
    \Delta E_{\mathrm{Sep}}(H_{\mathrm{Ant}}) = \Bigl(1 - \frac{1}{d}\Bigr) \Vert H_{\mathrm{Ant}}\Vert_\infty = 2  \omega.
\end{equation}
In turn, applying Theorem \ref{thm:SepCap}, we obtain $\mathcal{C}_{\rm P}^{1}(H_{\mathrm{Ant}}) = 2  \omega $, which is strictly smaller than $\mathcal{C}_{\rm P}(H_{\mathrm{Ant}})$. Therefore, $H_{\mathrm{Ant}}$ fulfills the condition \eqref{eq:CondHamCrit}, and the corresponding parallel capacity can be used to detect entanglement.

Now, let us consider 2-qubit Werner states \eqref{eq:WernerState}, which are known to be entangled iff $p > 1/3$. First, we take into account that the representation \eqref{eq:AntiferrH} of $H_{\mathrm{Ant}}$ suggests that $\mathbf{V}$ is proportional to identity, and $\Vec{\lambda}^{|\mathbf{V}|} = \omega \begin{pmatrix} 1 \\ 1 \\ 1 \end{pmatrix}$, while Werner states are described by $\Vec{\lambda}^{|\mathbf{T}|} = p \begin{pmatrix} 1 \\ 1 \\ 1 \end{pmatrix}$. Applying \eqref{app:eq:ParCapBound}, where the equality holds, since we consider qubit subsystems, we obtain
\begin{equation}
    \mathcal{C}_{\rm P}(W_p,H_{\mathrm{Ant}})=4 p  \omega\,.
\end{equation}
Therefore, the parallel capacity witnesses entanglement if and only if 
\begin{equation}\label{eq:CapCrit}
    p > \frac{1}{2}.    
\end{equation}

\subsection{Parallel capacity and work fluctuations}

While entanglement in quantum battery can be verified by its high capacity under local unitaries, extraction of work via randomly chosen local unitaries provides an alternative statistical entanglement witness. Indeed, entanglement in a bipartite $d \times d$ quantum battery can be verified by detecting large fluctuations in the corresponding work statistics \cite{WorkFluctuations}. Given a Hamiltonian $H_{ab}$, fulfillment of condition
\begin{eqnarray}\label{eq:WorkFluctCrit}
    \nonumber (\Delta \overline{W})^2 &>& \frac{1}{d^2 - 1} \Bigl( \frac{d^2}{4}( |\Vec{R}_{a}|^2 |\Vec{h}_{a}|^2 + |\Vec{R}_{b}|^2 |\Vec{h}_{b}|^2 ) \\
    &+& \frac{s_1(d, \Vec{R}_{a}, \Vec{R}_{b}) \Vert \mathbf{V} \Vert_2}{d^2 - 1} \Bigr),
\end{eqnarray}
where $s_1(d, \Vec{R}_{a}, \Vec{R}_{b}) = d - 1 + \frac{d^2}{4} \Bigl(\frac{d-2}{2} (|\Vec{R}_{a}|^2 + |\Vec{R}_{b}|^2 ) - \frac{d}{2} \Bigl| |\Vec{R}_{a}|^2 - |\Vec{R}_{b}|^2 \Bigr| \Bigr)$ and $\Vert \mathbf{V} \Vert_2 = \operatorname{tr}[\mathbf{V}^T \mathbf{V}] = \sum_{i=0}^{d^2-2} (\lambda_i^{|\mathbf{V}|})^2$, witnesses entanglement in $\rho_{ab}$. Taking into account decomposition of the work variance into generalized Pauli operators \cite{WorkFluctuations},
the criterion \eqref{eq:WorkFluctCrit} can be given in a more simple form, which depends only on the \textit{structure of the state} of quantum battery,
\begin{equation}\label{eq:WorkFluctCritSimple}
    \frac{d^4}{16} \Vert \mathbf{T} \Vert_2 > s_1(d, \Vec{R}_{a}, \Vec{R}_{b}).
\end{equation}
In order to compare it with the parallel capacity criterion \eqref{eq:ParCapCrit}, following Section \ref{sec:bounds}, we focus on a class of 2-qubit setups characterized by a Hamiltonian without local terms and/or locally maximally mixed states. In this case, for a given Hamiltonian $H_{ab}$, it is always possible to find states that do not fulfill \eqref{eq:WorkFluctCritSimple}, hence, do not lead to fluctuations of work high enough to detect entanglement via \eqref{eq:WorkFluctCrit}, yet have a parallel capacity that witnesses entanglement with respect to the criterion \eqref{eq:ParCapCrit} (see Appendix \ref{app:WorkFluct}).

As an example, let us turn back to the case of antiferromagnetic Hamiltonian \eqref{eq:AntiferrH} and 2-qubit Werner states \eqref{eq:WernerState}. For the latter, $s_1 = 1$, and the work fluctuation criterion \eqref{eq:WorkFluctCrit} witnesses their entanglement if and only if
\begin{equation}
    p > \frac{1}{\sqrt{3}}.
\end{equation}
Comparing it with the parallel capacity criterion \eqref{eq:CapCrit}, we conclude that there exists a subset of Werner states with $\frac{1}{2} < p \leq \frac{1}{\sqrt{3}}$ that are entangled due to their large parallel capacity despite low work fluctuations.

\section{Extended Parallel Ergotropy} \label{EPE}
We conclude our treatment by discussing the concept of extended parallel ergotropy (EPE).
In close analogy with the recently introduced extended local ergotropy (ELE)~\cite{ELE}, 
EPE is defined by setting as allowed operations free time evolutions and parallel unitaries.
The set of {\it extended parallel unitaries} is then
\begin{equation} \label{LocalOp}
    \mathcal{U}_{\rm Pex}:=\big \{
    U \;{\rm s.t.}\; U= \!\! \mathcal{T} \exp \big [-i\!\!\int_{0}^{T}\!\!\!\!H_{ab}+H_{a}(t)+H_{b}(t)dt \big ]\big \} 
    \,.
\end{equation}
Interestingly, a seminal result of quantum computing states that for any interacting Hamiltonian $H_{ab}$ this set includes all unitary operations on the joint $AB$ system \cite{SimHamDxD}. 

This means that $\mathcal{U}_{\rm Pex}$ coincides with the set of unitaries on the joint $AB$ system \cite{SimHam,SimHamDxD}, implying
\begin{equation}
    \forall \rho  , \;\mathcal{E}_{\rm Pex}(\rho,H)= \mathcal{E}(\rho,H)\,.
\end{equation} 
Notably, a series of parallel unitaries together with the entangling unitaries provided by internal time evolution enable any unitary transformation on the joint system. 
This comes as a result of quantum control theory \cite{Q.ControlBook,Q.Control(1972)} providing at least in some cases a practical way of implementing the desired unitary. 
More specifically, since $\mathtt{U}(d_{a}d_{b}) $ is compact, there exist a number $N$ and a time $T$ (both finite) such that any $ U \in \mathtt{U}(d_{a}d_{b})$
 can be implemented through a sequence of $N$ parallel unitaries and free time evolution,
with a total free evolution non greater than $T$. However, such $N $ and $T$ depend on the specific model and might be large \cite{Q.ControlBook, Q.Control(1972)}.

\section{Conclusions}
\label{sec:conc}
Maximum quantum work extraction is generally defined in terms of the \textit{ergotropy} functional, which does not capture potential difficulties in experimental implementation of the optimal unitary allowing for it, in particular, for multipartite systems.
In this framework, we consider a quantum battery consisting of many interacting sub-systems and study the maximum extractable work via concurrent local unitary operations on each subsystem. It is defined in terms of a functional that we call \textit{parallel ergotropy}, a quantity that in many cases is expected to be experimentally more accessible than the ergotropy of the entire battery. Thanks to its properties, we have derived useful bounds and exact analytical results for specific classes of states and/or Hamiltonians and provided receipts for numerical estimation of upper bounds via semi-definite programming techniques in the generic case.
In particular, we provided exact expressions of parallel ergotropy of qubit bipartite batteries in a locally maximally mixed state and equipped with an arbitrary Hamiltonian or in arbitrary state and equipped with a locally null Hamiltonian. 
We have also observed that parallel ergotropy outperforms work extraction via \textit{egoistic} strategies, in which the first agent $A$ extracts locally the maximum work, followed by the second agent $B$ that extracts locally the remaining work. This indicates that cooperation between the agents in work extraction is necessary for an overall benefit. 

Apart from ergotropy, we have studied another quantity characterizing capability of a quantum system to accumulate and supply work, \textit{quantum battery capacity}. Restriction to the concurrent local unitary operations has revealed that its counterpart, \textit{parallel capacity}, can serve as a witness of entanglement. We have compared it with another entanglement witness based on statistical properties of work extraction from a multipartite quantum battery, namely revealing entanglement under large variance of the average work with respect to random Haar parallel unitaries. We found that a high parallel capacity that detects entanglement does not necessarily imply fluctuations of work high enough to detect entanglement via its large variance.
Finally, we showed that if free time evolutions and parallel unitaries are allowed, one can always extract the maximum work stored in the quantum battery, i.e., saturate the gap between the parallel ergotropy and ergotropy of the entire system.

Our work leaves several open research problems. Although we introduced the parallel ergotropy and quantum battery capacity for a system composed of an arbitrary number $N$ of subsystems, our study has been focused mainly on the case $N=2$. 
 Therefore, analysis of these quantities beyond the bipartite case would be a natural development of the present work. In particular, their relation to multipartite entanglement and behavior in the limit $N \rightarrow \infty$ have to be examined. 

\paragraph*{Acknoledgements.---}
We thank Antonio Ac\'in and Vittorio Giovannetti for fruitful discussions.
DF acknowledges financial support from PNRR MUR Project No. PE0000023-NQSTI. RN acknowledges support from the Government of Spain (Severo Ochoa CEX2019-000910-S and TRANQI), Fundació Cellex, Fundació Mir-Puig, Generalitat de Catalunya (CERCA program) and support from the Quantera project Veriqtas. This research was funded in whole or in part by the Austrian Science Fund (FWF) 10.55776/PAT4559623. For open access purposes, the author has applied a CC BY public copyright license to any author-accepted manuscript version arising from this submission.

\begin{widetext}

\appendix

\section{Generalized Pauli operator  formalism for parallel ergotropy} \label{GPO}
In order to facilitate calculation of PE \eqref{eq:parallel-ergo} for bipartite quantum batteries of arbitrary dimension, we treat the corresponding states and Hamiltonians exploiting the generalized Pauli operators (GPO) expansion formalism \cite{GOP}. Given a quantum system, let $d$ be the dimension of the underlying Hilbert space. A GPO set is then a collection of $ d^2-1 $ operators $ \sigma^{j} $ such that 
\begin{equation}
    \tr [\sigma^{j} ]=0 , \;\;\; 
    \tr [\sigma^{i}\sigma^{j} ]=2\delta_{ij}.
\end{equation}
For $ d=2 $, this set can be reduced to standard Pauli operators, while for standard form of the GPO set for $ d>3 $ we refer the reader to \cite{GOP}. Equipped with the identity operator, GPO form an orthogonal basis on the real vector space of self-adjoint operators acting on the $d$-dimensional Hilbert space.
 
Let us consider a bipartite $(d_a \times d_b)$-dimensional quantum system in state $\rho_{ab} \in D(\mathcal{H}_a \otimes \mathcal{H}_b)$ and equipped with the Hamiltonian \eqref{hamilt}, where, without loss of generality, we assume
\begin{equation}
    \tr_a[V_{ab}] = \tr_b[V_{ab}] = 0.
\end{equation}
Then we can represent the state and the local and non-local components of the Hamiltonian in the following way using the GPO,
\begin{eqnarray}
\label{StateDef}
\rho_{ab} &=& \frac{\mathbb{1}_{ab}}{d_ad_b} + \frac{1}{2}\Bigl((\vec R_a \cdot \vec \sigma_a) \otimes \frac{\mathbb{1}_b}{d_b}\Bigr) + \frac{1}{2}\Bigl( \frac{\mathbb{1}_a}{d_a} \otimes (\vec R_b \cdot \vec \sigma_b)\Bigr) + \frac{1}{4}\sum_{i=1}^{d_a^2-1} \sum_{j=1}^{d_b^2-1} {\mathbf T}_{i,j} (\sigma_{a}^i \otimes \sigma_{b}^j),\\
H_a &=& (\vec h_a \cdot \vec \sigma_a  ) \otimes \mathbb{1}_b
, \;\;\;\;\;\;\;\; H_{ b} \; = \; \mathbb{1}_a \otimes (\vec h_b \cdot \vec \sigma_b  ), \;\;\;\;\;\;\; V_{ab} \; = \; \sum_{i=1}^{d_a^2-1} \sum_{j=1}^{d_b^2-1} \mathbf{V}_{j,i} (\sigma_{a}^i \otimes \sigma_{b}^j),\label{HamiltonianDef}
\end{eqnarray}
where
\begin{eqnarray}
\vec R_a &=& \tr\Bigl[(\vec \sigma_{a} \otimes \mathbb{1}_b) \rho_{ab}\Bigr], \;\;\;\;\;\;\;\; \vec R_b \; = \; \tr\Bigl[(\mathbb{1}_a \otimes \vec \sigma_{b}) \rho_{ab}\Bigr], \;\;\;\;\;\;\; {\mathbf T}_{i,j} \; = \; \tr[(\sigma_{a}^i \otimes \sigma_{b}^j)\rho_{ab}],
\end{eqnarray}
and
\begin{eqnarray}
\vec h_a &=& \frac{1}{2d_b}\tr\Bigl[(\vec \sigma_{a} \otimes \mathbb{1}_b) H_{ab}\Bigr], \;\;\;\;\;\;\;\; \vec h_b \; = \; \frac{1}{2d_a} \tr\Bigl[(\mathbb{1}_a \otimes \vec \sigma_{b}) H_{ab}\Bigr], \;\;\;\;\;\;\;\; \mathbf{V}_{j,i} \; = \; \frac{1}{4}\tr[V_{ab}(\sigma_{a}^i \otimes \sigma_{b}^j)].
\end{eqnarray}
Notice the swapped indices in the definition $\mathbf{V}_{j,i} $, so that introduction of transpose of $\mathbf{V}$ into the energy functional can be omitted,
\begin{eqnarray} \label{Energy}  \nonumber E(\rho_{ab}) &:=& \tr[H_{ab}\rho_{ab}] \\
&=& (\vec h_a \cdot \vec R_a) + (\vec h_b \cdot \vec R_b) + \tr[\mathbf{V}{\mathbf T}] \,.
\end{eqnarray}
GPO expansion of $\rho_{ab}$ and $H_{ab}$ in \eqref{StateDef} and \eqref{HamiltonianDef} suggests the following simplification of optimization over unitaries in the PE functional \eqref{eq:parallel-ergo}. Since the adjoint representation of $\mathtt{SU}(n)$ (hence, the representation of $\mathtt{U}(n)$ on $\mathfrak{su}(n)$ as well) is a subgroup $\mathcal{SO}(n^2-1) \subseteq \mathtt{SO}(n^2-1)$ \cite{Wolf1968}, for any unitaries $U_{a} \in \mathtt{U}(d_a)$, $U_{b} \in \mathtt{U}(d_b)$ acting on $ \mathcal{H}_{a}$ and $\mathcal{H}_{b}$, respectively, there exists a pair of associated rotations $\mathbf{O}_a \in \mathtt{SO}(d_{a}^2-1)$ and $\mathbf{O}_b \in \mathtt{SO}(d_{b}^2-1) $ transforming $\vec{R}_{a/b}$ and $\mathbf{T}$,
\begin{equation}
(\mathrm{U}_a \otimes \mathrm{U}_b)[\rho_{ab}] = \frac{\mathbb{1}_{ab}}{d_ad_b} + \frac{1}{2}\Bigl((\mathbf{O}_a\vec R_a \cdot \vec \sigma_a) \otimes \frac{\mathbb{1}_b}{d_b}\Bigr) + \frac{1}{2}\Bigl( \frac{\mathbb{1}_a}{d_a} \otimes (\mathbf{O}_b \vec R_b \cdot \vec \sigma_b)\Bigr) + \frac{1}{4}\sum_{i=1}^{d_a^2-1} \sum_{j=1}^{d_b^2-1} \mathbf{O}_a {\mathbf T}_{i,j} \mathbf{O}_b^T (\sigma_{a}^i \otimes \sigma_{b}^j).
\end{equation}
Therefore, the optimization over $\mathtt{U}(d_a)$ and $\mathtt{U}(d_b)$ in \eqref{eq:parallel-ergo} can be substituted by optimization over $\mathcal{SO}(d_a^2-1)$ and $\mathcal{SO}(d_b^2-1)$, respectively,
\begin{equation} 
    \label{parallel-pauli}
    \mathcal{E}_{\rm P}(\rho_{ab}, H_{ab}) = E(\rho_{ab})-\underset{\substack{\mathbf{O}_a \in  \mathcal{SO}(d_{a}^2-1) \\ \mathbf{O}_b \in  \mathcal{SO}(d_{b}^2-1)}}{\rm min}\;\; \Bigl( (\vec h_a \cdot \mathbf{O}_a \vec R_a) + (\vec h_b\cdot \mathbf{O}_b \vec R_b) + \tr[\mathbf{V}\mathbf{O}_a{\mathbf T}\mathbf{O}_b^{T}]  \Bigr)
\end{equation}
Since $\mathcal{SO}(n^2-1)$ is a proper subgroup of $\mathtt{SO}(n^2-1)$ (unless $n = 2$, when they coincide), we can establish upper bound on PE by performing the optimization over the entire rotation group,
\begin{equation} 
    \mathcal{E}_{\rm P}(\rho_{ab}, H_{ab}) \leq  E(\rho_{ab})-\underset{\substack{\mathbf{O}_a \in  \mathtt{SO}(d_{a}^2-1) \\ \mathbf{O}_b \in  \mathtt{SO}(d_{b}^2-1)}}{\rm min}\;\; \Bigr( (\vec h_a \cdot \mathbf{O}_a \vec R_a) + (\vec h_b\cdot \mathbf{O}_b \vec R_b) + \tr[\mathbf{V}\mathbf{O}_a{\mathbf T}\mathbf{O}_b^{T}]  \Bigl),
\label{GenericUpper}
\end{equation}
which is saturated for $d_a = d_b = 2$, i.e., bipartite qubit systems.

\section{Egoistic strategies for quasi-generic Hamiltonians}
\label{app:egoistic}

Let us consider a bipartite quantum battery composed of two qubits. Using the decomposition of its state \eqref{StateDef} and Hamiltonian \eqref{HamiltonianDef} via Pauli operators, we assume that the latter is characterized by $\mathbf{V}$ with 3 distinct eigenvalues and $\operatorname{det}[\mathbf{V}] > 0$. In turn, the state $\rho_{ab}$ is chosen to satisfy the following conditions,
\begin{eqnarray} \label{app:eq:condego}
    \mathbf{T} &=& -\eta\mathbb{1}, \\
    \Vec{R}_a &=& 0, \\
    (\Vec{R}_b \cdot \vec h_b) &\neq& -\frac{|\vec R_b|}{|\vec h_b|}, \\
    |\Vec{R}_b| &=& 1/2. \label{app:eq:condego4}
\end{eqnarray}
Before to proceed with comparison of cooperative and egoistic strategies for this setup, let us verify of existence of physical states defined by these conditions. The positivity and normalization of a bipartite qubit state can be guaranteed by the following criterion \cite{2-Qubits},
\begin{align}
& 3-\Bigl(\lVert \mathbf{T} \rVert^2_2 + |\vec R_a|^2 + |\vec R_b|^2\Bigr) \geq 0, \\
& 2\Bigl(\vec R_a^T \mathbf{T} \vec R_b - \operatorname{det}[\mathbf{T}]\Bigr) - \Bigl(\lVert \mathbf{T} \rVert^2_2 + |\vec R_a|^2 + |\vec R_b|^2 - 1 \Bigr) \geq 0,  \\
& 8\Bigl(\vec R_a^T (\mathbf{T} + \overline{\mathbf{T}}) \vec R_b - \operatorname{det}[\mathbf{T}]\Bigr) + \Bigl(\lVert \mathbf{T} \rVert^2_2 + |\vec R_a|^2 + |\vec R_b|^2 - 1\Bigr)^2 - 4\Bigl(| \vec R_a|^2 |\vec R_b|^2 + \lVert \mathbf{T} \rVert^2_2 (|\vec R_a|^2 + |\vec R_b|^2) + \lVert \overline{\mathbf{T}} \rVert^2_2\Bigr) \geq 0, 
\end{align}
where $\lVert X \rVert_2 = \sqrt{\operatorname{Tr}[X^T X]}$, and $\overline{\mathbf{T}}$ is the cofactor matrix of $\mathbf{T}$ defined by the identity $\mathbf{T} \overline{\mathbf{T}}^T = \overline{\mathbf{T}}^T \mathbf{T} = \operatorname{det}[\mathbf{T}]\mathbb{1}$. The chosen state $\rho_{ab}$ is straightforwardly characterized by $\operatorname{det}[\mathbf{T}] = -\eta^3$ and $\lVert \mathbf{T} \rVert_2 = \sqrt{3} \eta$. Therefore, the corresponding cofactor matrix is given by $\overline{\mathbf{T}} = \eta^2 \mathbb{1}$ with $\lVert \overline{\mathbf{T}} \rVert_2 = \sqrt{3} \eta^2$. Therefore, the criterion is reduced to
\begin{eqnarray}
    \eta^2 &\leq& \frac{11}{12}, \\
    \eta^2 (3 - 2\eta) &\leq& \frac{3}{4}, \\
    \eta^2 (15 - 16\eta + 6\eta^2) &\leq& \frac{9}{8}.
\end{eqnarray}
The last inequality appears to be the most restrictive one, and the entire set of conditions is fulfilled if $ \eta\in [-|\eta_{\rm min}|, \eta_{\rm max}]$, where $\eta_{\rm min} \approx -0.241$ and $\eta_{\rm max} \approx 0.329$, proving the existence of physical states in this range of $\eta$.

Assuming $\eta > 0$, first, we calculate PE for the system satisfying \eqref{app:eq:condego}--\eqref{app:eq:condego4},
\begin{equation}\label{app:eq:par-ergo}
       \mathcal{E}_{\rm P}(\rho_{ab}, H_{ab}) = (\vec h_b \cdot \vec R_b) + \eta \operatorname{tr}[(-\mathbf{V})] - \min_{\mathbf{O}_a, \mathbf{O}_b \in \mathtt{SO}(3)}\Bigl( (\vec h_b\cdot \mathbf{O}_b \vec R_b ) + \eta \;\tr[(-\mathbf{V})\mathbf{O}_a\mathbf{O}_b^T] \Bigr),
\end{equation}
where we have taken into account that the upper bound \eqref{GenericUpper} is saturated for a bipartite quantum battery. Since minimization over both $\mathbf{O}_a$ and $\mathbf{O}_b$ in \eqref{app:eq:par-ergo} is performed simultaneously, we can substitute it by minimization over $\mathbf{O}_1 := \mathbf{O}_a$ and $\mathbf{O}_2 := \mathbf{O}_a \mathbf{O}_b^T$,
\begin{equation}\label{app:eq:par-ergo-min}
       \mathcal{E}_{\rm P}(\rho_{ab}, H_{ab}) = (\vec h_b \cdot \vec R_b) + \eta \operatorname{tr}[(-\mathbf{V})] - \Bigl( \min_{\mathbf{O}_1 \in \mathtt{SO}(3)} (\vec h_b\cdot \mathbf{O}_b \vec R_b ) + \eta \min_{\mathbf{O}_2 \in \mathtt{SO}(3)} \tr[(-\mathbf{V})\mathbf{O}_2] \Bigr),
\end{equation}
Straightforwardly, the optimal $\overline{\mathbf{O}}_1$ is such that $\overline{\mathbf{O}}_1 \vec R_b = -\frac{|R_b|}{|h_b|} \vec h_b $. The second minimization is achieved by a unique (since $\mathbf{V}$ is non-degenerate) optimal $\overline{\mathbf{O}}_2$ such that $\operatorname{tr}[(-\mathbf{V})\overline{\mathbf{O}}_2] = -\operatorname{tr}[|\mathbf{V}|] = -\sum_{i=0}^{d^2-1} \lambda_i^{|\mathbf{V}|}$, where $\{\lambda_i^{|\mathbf{V}|}\}_i$ are singular values of $\mathbf{V}$, i.e., eigenvalues of $|\mathbf{V}| = \sqrt{\mathbf{V}^T \mathbf{V}}$. Therefore, we obtain the following PE,
\begin{equation}
     \mathcal{E}_{\rm P}(\rho_{ab},H_{ab}) = (\vec h_b \cdot \vec R_b) + |h_b||R_b| + \eta \tr\bigl[|\mathbf{V}| - \mathbf{V}\bigr].
\end{equation}
Notice that the local unitary operation performed by $A$ may require energy. In this case, some energy has to be pumped by $A$ into the system, so that $B$ is able to extract the maximal work globally.

Now let us calculate the ergotropy $\mathcal{E}_{\rm La}(\rho, H) + \mathcal{E}_{\rm Lb}(\overline{\mathsf{U}}_a[\rho], H)$ under egoistic strategies, as introduced in the hierarchy \eqref{CooperationGap}. The maximal work that can be extracted by the agent $A$ is given by
\begin{eqnarray}
    \nonumber \mathcal{E}_{\rm La}(\rho_{ab}, H_{ab}) &=& (\vec h_b \cdot \vec R_b) + \eta \operatorname{tr}[(-\mathbf{V})] - \min_{\mathbf{O}_a \in \mathtt{SO}(3)}\Bigl( (\vec h_b\cdot \vec R_b ) + \eta \;\tr[(-\mathbf{V})\mathbf{O}_a] \Bigr) \\
    &=& \eta \;\tr\bigl[|\mathbf{V}| - \mathbf{V}\bigr],
\end{eqnarray}
where we have taken into account that the optimal unitary $\overline{U}_a$ of $A$ is associated to $\overline{\mathbf{O}}_2$. In turn, the maximal work that $B$ can extract after $A$ is
\begin{eqnarray}
    \nonumber \mathcal{E}_{\rm Lb}(\overline{\mathsf{U}}_a[\rho_{ab}], H_{ab}) &=& (\vec h_b \cdot \vec R_b) + \eta \operatorname{tr}[(-\mathbf{V})\overline{\mathbf{O}}_2] - \min_{\mathbf{O}_b \in \mathtt{SO}(3)}\Bigl( (\vec h_b\cdot \mathbf{O}_b \vec R_b ) + \eta \;\tr[(-\mathbf{V})\overline{\mathbf{O}}_2 \mathbf{O}_b^T] \Bigr) \\
    &=& (\vec h_b \cdot \vec R_b) - \eta \operatorname{tr}[|\mathbf{V}|] - \min_{\mathbf{O}_b \in \mathtt{SO}(3)}\Bigl( (\vec h_b\cdot \mathbf{O}_b \vec R_b ) - \eta \;\tr[|\mathbf{V}| \mathbf{O}_b^T] \Bigr).\label{app:eq:min-ego}
\end{eqnarray}
Minimization of the last term in \eqref{app:eq:min-ego} requires $\mathbf{O}_b = \mathbb{1}$. On the other hand, in order to minimize the term $(\vec h_b\cdot \mathbf{O}_b \vec R_b )$ it is necessary to apply a unitary associated to $\overline{\mathbf{O}}_1$. Therefore, the optimal values of both terms cannot be achieved simultaneously, and the maximum work that can extracted by $B$ is strictly upper-bounded by
\begin{eqnarray}
    \nonumber \mathcal{E}_{\rm Lb}(\overline{\mathsf{U}}_a[\rho_{ab}], H_{ab}) < (\vec h_b \cdot \vec R_b) - \eta \operatorname{tr}[|\mathbf{V}|] + |\vec h_b||\vec R_b| + \eta \;\tr[|\mathbf{V}|] = (\vec h_b \cdot \vec R_b) + |\vec h_b||\vec R_b|,
\end{eqnarray}
and we obtain a strict inequality for PE and the ergotropy under egoistic strategies with respect to $A$,
\begin{equation}
    \mathcal{E}_{\rm La}(\rho_{ab}, H_{ab}) + \mathcal{E}_{\rm Lb}(\overline{\mathsf{U}}_a[\rho_{ab}], H_{ab}) < \mathcal{E}_{\rm P}(\rho_{ab}, H_{ab}).
\end{equation}
Following the same steps, it is straightforward to demonstrate that egoistic work extracting strategies in the swapped order of the agents does not allow to achieve PE as well. Indeed, if agent $B$ acts first, it cannot find a unitary that minimizes both last terms simultaneously since $\overline{\mathbf{O}}_1 \neq \overline{\mathbf{O}}_2$. Therefore, it can extract maximal work
\begin{eqnarray}
    \nonumber \mathcal{E}_{\rm Lb}(\rho_{ab}, H_{ab}) &=& (\vec h_b \cdot \vec R_b) + \eta \operatorname{tr}[(-\mathbf{V})] - \min_{\mathbf{O}_b \in \mathtt{SO}(3)}\Bigl( (\vec h_b\cdot \mathbf{O}_b \vec R_b ) + \eta \;\tr[(-\mathbf{V})\mathbf{O}_b^T] \Bigr) \\
    &=& (\vec h_b \cdot \vec R_b) + \eta \operatorname{tr}[(-\mathbf{V})] - (\vec h_b\cdot \overline{\mathbf{O}}_b\vec R_b ) + \eta \;\tr[\mathbf{V}\overline{\mathbf{O}}_b^T],
\end{eqnarray}
where $\overline{\mathbf{O}}_b$ is associated to the optimal unitary. After that, $A$ can extract maximal work
 \begin{eqnarray}
     \nonumber \mathcal{E}_{\rm La}(\overline{\mathsf{U}}_b[\rho_{ab}], H_{ab}) &=& (\vec h_b \cdot \overline{\mathbf{O}}_b\vec R_b) + \eta \operatorname{tr}[(-\mathbf{V})\overline{\mathbf{O}}_b^T] - \min_{\mathbf{O}_a \in \mathtt{SO}(3)}\Bigl( (\vec h_b\cdot \mathbf{O}_b \vec R_b ) + \eta \;\tr[(-\mathbf{V})\mathbf{O}_a\overline{\mathbf{O}}_b^T] \Bigr) \\
    &=& \eta \operatorname{tr}\bigl[|\mathbf{V}| - \mathbf{V}\overline{\mathbf{O}}_b^T\bigr].
 \end{eqnarray}
Summing both contributions up and taking into account that $\overline{\mathbf{O}}_b \neq \overline{\mathbf{O}}_1$, we find that PE is a strict upper bound in this case as well,
\begin{equation}
    \mathcal{E}_{\rm Lb}(\rho_{ab}, H_{ab})+\mathcal{E}_{\rm La}(\overline{\mathsf{U}}_b[\rho_{ab}], H_{ab}) = (\vec h_b \cdot \vec R_b) + (\vec h_b\cdot \overline{\mathbf{O}}_b\vec R_b ) + \eta \tr\bigl[|\mathbf{V}| - \mathbf{V}\bigr] < \mathcal{E}_{\rm P}(\rho_{ab}, H_{ab}).
\end{equation}

For $\operatorname{det}[\mathbf{V}] < 0$, choosing $\eta < 0$ leads to the same results. Therefore, we conclude that except for particular Hamiltonians characterized by $\operatorname{det}[\mathbf{V}]=0$ or degenerate spectrum, there exist states that open the gap between cooperative and egoistic work extracting strategies.

\section{Upper bound on parallel ergotropy in specific cases} \label{ProofLLMBound}
Considering bipartite systems with $d_a = d_b = d$, we assume that one of hypotheses
\begin{enumerate}[label=(\roman*)]
    \item $\Vec{R}_{a} = \Vec{R}_{b} = 0 \;\;\; \Rightarrow \;\;\; \tr_a [\rho_{ab}] = \frac{\mathbb{1}_b}{d}, \; \tr_b [\rho_{ab}] = \frac{\mathbb{1}_a}{d}$,
    \item $\Vec{h}_{a} = \Vec{h}_{b} = 0 \;\;\;\; \Rightarrow \;\;\; H_{a} = H_{b} = 0$.
\end{enumerate}
Hypothesis (i) refers to the assumption of a bipartite system being in a locally maximally mixed state, while the Hamiltonian is left generic. Hypothesis (ii), instead, assumes a specific non-local Hamiltonian (e.g., the antiferromagnetic Hamiltonian \eqref{eq:AntiferrH}), while the state is left generic. 
Although requiring both hypotheses to be true simultaneously is natural from the mathematical point of view, it lacks physical relevance. Therefore, in what follows, we require only a single hypothesis to be true. This cancels out the local terms in \eqref{GenericUpper}, and the PE functional is upper-bounded as follows,
\begin{eqnarray}
\label{opt-interac}
    \nonumber \mathcal{E}_{\mathrm{P}}(\rho_{ab}, H_{ab}) &\leq& E(\rho_{ab}) - \underset{\mathbf{O}_a,\mathbf{O}_b \in \mathtt{SO}(d^2 - 1)}{\rm min} \Bigl( \tr[\mathbf{V}\mathbf{O}_a\mathbf{T}\mathbf{O}_b^{T}] \Bigr) \\
    &=& E(\rho_{ab}) + \underset{\mathbf{O}_a,\mathbf{O}_b \in \mathtt{SO}(d^2 - 1)}{\rm max} \Bigl( \tr[(-\mathbf{V})\mathbf{O}_a\mathbf{T}\mathbf{O}_b^{T}] \Bigr).
\end{eqnarray}
Therein, it is necessary to distinguish between two cases depending on signs of eigenvalues of $-\mathbf{V}\mathbf{T}$.

\subsection{Case $ \det(-\mathbf{V}\mathbf{T}) \geq 0 $}
In this case, either both $-\mathbf{V}$ and $\mathbf{T} $ have an \textit{even} number of negative eigenvalues or both have an \textit{odd} number of them. In the first case, 
we can decompose both matrices as $-\mathbf{V}=\mathbf{O_{\mathbf{V}}^{'}\mathbf{\Sigma}_{\mathbf V}O_{V}}$ and  $\mathbf{T}=\mathbf{O_{{\mathbf T}}^{'}\mathbf{\Sigma}_{\mathbf T}O_{{\mathbf T}} }$, where $\mathbf{O_V}, \mathbf{O_V'}, \mathbf {O_T}, \mathbf{O_T'} \in \mathtt{SO}(d^2 - 1)$, and $\mathbf{\Sigma_V}$ and $\mathbf{ \Sigma_T}$ are diagonal, positive, and having eigenvalues of $|\mathbf{V}|$ and $|\mathbf{T}|$, respectively, in non-decreasing order. 
In the second case, we use a similar decomposition, however, the value of the smallest eigenvalue is left negative, so that $\mathbf{\Sigma_V}={\rm diag}(-\lambda^{|\mathbf{V}|}_{0}, \lambda^{|\mathbf{V}|}_{1}, \dots, \lambda^{|\mathbf{V}|}_{d^{2}-2}) $ and $\mathbf{\Sigma_T}={\rm diag}(-\lambda^{|{\mathbf T}|}_{0}, \lambda^{{|\mathbf T|}}_{1},\dots,\lambda^{|{\mathbf T}|}_{d^{2}-2}) $. In both cases, the optimization \eqref{opt-interac} reduces to
\begin{equation}
    \mathcal{E}_{\mathrm{P}}(\rho_{ab}, H_{ab}) \leq E(\rho_{ab}) + \underset{\mathbf{O}_1,\mathbf{O}_2 \in \mathtt{SO}(d^2 - 1)}{\max} \tr[\mathbf{\Sigma}_{\mathbf V}\mathbf{O}_1\mathbf{\Sigma}_{\mathbf T}\mathbf{O}_2]\,,
\end{equation}
where $\mathbf{O}_1 = \mathbf{O_V} \mathbf{O}_a \mathbf{O_T'}$ and $\mathbf{O}_2 = \mathbf{O_T} \mathbf{O}_b \mathbf{O_V'}$. Finally, the Von Neumann trace inequality \cite{Rearr} allows us to provide the following upper bound for any $\mathbf{O}_1, \mathbf{O}_2 \in \mathtt{SO}(d^2 - 1)$, 
\begin{equation}   \tr[\mathbf{\Sigma}_{\mathbf V}\mathbf{O}_1\mathbf{\Sigma}_{\mathbf T}\mathbf{O}_2] \leq (\vec \lambda^{|\mathbf{V}|}\cdot \vec \lambda^{|{\mathbf T}|}),
\end{equation}
where $\vec\lambda^{|\mathbf{V}|}$ and $\vec\lambda^{|\mathbf{T}|}$ are the vectors composed by $\{\lambda_i^{|\mathbf{V}|}\}_i$ and $\{\lambda_i^{|\mathbf{T}|}\}_i$, respectively. Therefore,
\begin{eqnarray}\label{eq:app:ErgBoundPlus}
   \mathcal{E}_{\rm P}(\rho_{ab}, H_{ab}) \leq \tr[\rho_{ab} V_{ab}] + (\vec \lambda^{|\mathbf{V}|}\cdot \vec \lambda^{|{\mathbf T}|})\,,
   \quad
   {\rm if}\,\,\det(-\mathbf{V}\mathbf{T})\geq 0\,. 
\end{eqnarray}
\subsection{Case $\det(-\mathbf{V}\mathbf{T})<0$}
In this case, the proof proceeds in the same way as above. Nevertheless, it is necessary to take into account that the decomposition of $\mathbf{V}$ and $\mathbf{T}$ via $\mathtt{SO}(d^{2}-1)$-matrices results in diagonal matrices $\mathbf{\Sigma_V}$ and $\mathbf{\Sigma_T}$ that have an odd and even number of negative eigenvalues, respectively, or vice versa. This means that, in their product not all negative signs cancel out, and the maximum is achieved thereby by leaving the smallest eigenvalue of $|\mathbf{\Sigma}_{\mathbf V}|$ (or $|\mathbf{\Sigma}_{\mathbf T}|$) negative in $\mathbf{\Sigma}_{\mathbf{V}}$ (or $\mathbf{\Sigma}_{\mathbf{T}}$), leading to
\begin{equation}\label{eq:app:ErgBoundMinus}
   \mathcal{E}_{\rm P}(\rho_{ab},H_{ab}) \leq \tr[\rho_{ab} V_{ab}] + \Bigl( (\vec \lambda^{|\mathbf{V}|}\cdot \vec \lambda^{|{\mathbf T}|}) - 2\lambda^{|\mathbf{V}|}_{0} \lambda^{|{\mathbf T}|}_{0}\Bigr), \quad
   {\rm if}\,\,\det(-\mathbf{V}\mathbf{T}) < 0\,,
\end{equation}
and concluding the proof of inequalities (\ref{eq:LMMBound}).

\section{Upper bound on parallel capacity in specific cases} \label{app:sec:PCbound}

Similarly to parallel ergotropy, parallel capacity can be decomposed as follows,
\begin{align} 
\label{parallel-cap-pauli}
    \nonumber \mathcal{C}_{\rm P}(\rho_{ab}, H_{ab}) 
&= \underset{\mathbf{O}_a,\;\mathbf{O}_b \in  \mathcal{SO}(d_{a,b}^2-1)}{\rm max}\;\; \big( \vec h_a \cdot \mathbf{O}_a \vec R_a +\vec h_b\cdot \mathbf{O}_b \vec R_b + \tr[\mathbf{V}\mathbf{O}_a{\mathbf T}\mathbf{O}_b^{T}]  \big) \\
&- \underset{\mathbf{O}_a,\;\mathbf{O}_b \in  \mathcal{SO}(d_{a,b}^2-1)}{\rm min}\;\; \big( \vec h_a \cdot \mathbf{O}_a \vec R_a +\vec h_b\cdot \mathbf{O}_b \vec R_b + \tr[\mathbf{V}\mathbf{O}_a{\mathbf T}\mathbf{O}_b^{T}]  \big)\,,
\end{align}
and, by enlarging the domain from $\mathcal{SO}(d^2-1)$ to $\mathtt{SO}(d^2-1)$, bounded from above,
\begin{align} 
    \nonumber\mathcal{C}_{\rm P}(\rho_{ab}, H_{ab})  &\leq \underset{\mathbf{O}_a,\;\mathbf{O}_b \in  \mathtt{SO}(d_{a,b}^2-1)}{\rm max}\;\; \big( \vec h_a \cdot \mathbf{O}_a \vec R_a +\vec h_b\cdot \mathbf{O}_b \vec R_b + \tr[\mathbf{V}\mathbf{O}_a{\mathbf T}\mathbf{O}_b^{T}]  \big) \\
    &- \underset{\mathbf{O}_a,\;\mathbf{O}_b \in  \mathtt{SO}(d_{a,b}^2-1)}{\rm min}\;\; \big( \vec h_a \cdot \mathbf{O}_a \vec R_a +\vec h_b\cdot \mathbf{O}_b \vec R_b + \tr[\mathbf{V}\mathbf{O}_a{\mathbf T}\mathbf{O}_b^{T}]  \big)\,,
\label{GenericCapUpper}
\end{align}
where the equality holds iff $d=2$. Applying hypothesis (i) or (ii) of Appendix \ref{ProofLLMBound} to the state and Hamiltonian, 
\begin{equation}
\label{opt-interac-cap}
     \mathcal{C}_{\rm P}(\rho_{ab}, H_{ab})  \leq \underset{\mathbf{O}_a,\;\mathbf{O}_b \in \mathtt{SO}(d^2 - 1)}{\rm max}\;\; \tr[\mathbf{V}\mathbf{O}_a\mathbf{T}\mathbf{O}_b^{T}]\, + \underset{\mathbf{O}_a,\;\mathbf{O}_b \in \mathtt{SO}(d^2 - 1)}{\rm max}\;\; \tr[(-\mathbf{V})\mathbf{O}_a\mathbf{T}\mathbf{O}_b^{T}]\, ,
\end{equation}
Taking into account the linearity of the trace, so that $\operatorname{tr}(-X) = -\operatorname{tr}(X)$, it is enough to consider optimization of one of the terms in \eqref{opt-interac-cap}. Without loss of generality, we perform the optimization of the second term that coincides with \eqref{opt-interac}, hence, is given by \eqref{eq:app:ErgBoundPlus} for $\operatorname{det}(-\mathbf{V}\mathbf{T}) \geq 0$ and \eqref{eq:app:ErgBoundMinus} $\operatorname{det}(-\mathbf{V}\mathbf{T}) < 0$. Then, we take into account multilinearity of determinant of a matrix, so that $\operatorname{det}(-X) = (-1)^{n}\operatorname{det}(X)$, where $n=d^2 - 1$ is dimension of $\mathbf{V}$ and $\mathbf{T}$. Therefore, for even $d$, the optimization of the first term in \eqref{opt-interac-cap} will be given by \eqref{eq:app:ErgBoundPlus} for $\operatorname{det}(-\mathbf{V}\mathbf{T}) < 0$ and \eqref{eq:app:ErgBoundMinus} $\operatorname{det}(-\mathbf{V}\mathbf{T}) \geq 0$. This leads to
\begin{eqnarray}
    \nonumber \mathcal{C}_{\rm P}(\rho_{ab}, H_{ab})  &\leq& ( \Vec{\lambda}^{|\mathbf{V}|} \cdot \Vec{\lambda}^{|\mathbf{T}|}) + (\Vec{\lambda}^{|\mathbf{V}|} \cdot \Vec{\lambda}^{|\mathbf{T}|}) - 2 \lambda_0^{|\mathbf{V}|} \lambda_0^{|\mathbf{T}|}) \\
    &=& 2\Bigl( (\Vec{\lambda}^{|\mathbf{V}|} \cdot \Vec{\lambda}^{|\mathbf{T}|}) - \lambda_0^{|\mathbf{V}|} \lambda_0^{|\mathbf{T}|}  \Bigr). \label{app:eq:ParCapBound}
\end{eqnarray}
On the other hand, for odd $d$, the optimization of the first term in \eqref{opt-interac-cap} will be the same as for the second one, so that 
\begin{equation}
     \mathcal{C}_{\rm P}(\rho_{ab}, H_{ab}) \leq \begin{cases}
      2 (\Vec{\lambda}^{|\mathbf{V}|} \cdot \Vec{\lambda}^{|\mathbf{T}|}), & \text{if $\operatorname{det}(-\mathbf{V}\mathbf{T}) \geq 0$,}\\
      2((\vec \lambda^{|\mathbf{V}|}\cdot \vec \lambda^{|{\mathbf T}|}) - 2\lambda_0^{|\mathbf{V}|}\lambda_0^{|\mathbf{T}|}) & \text{if $\operatorname{det}(-\mathbf{V}\mathbf{T}) < 0$.}
    \end{cases}
\end{equation}

\section{Comparison of entanglement criteria in specific cases for two qubits}
\label{app:WorkFluct}
First, we write down the parallel capacity criterion, taking into account the saturated upper bound \eqref{eq:ParCapUpperBound} for $d=2$,
\begin{equation}
    2\Bigl(( \Vec{\lambda}^{|\mathbf{V}|} \cdot \Vec{\lambda}^{|\mathbf{T}|}) - \lambda_0^{|\mathbf{V}|} \lambda_0^{|\mathbf{T}|}  \Bigr) > \mathcal{C}_1(H_{ab}),
\end{equation}
where we have taken into account decomposition \eqref{app:eq:ParCapBound} in terms of eigenvalues of $\mathbf{V}$ and $\mathbf{T}$. In order to decompose in the same way the maximum parallel capacity of separable states, we recall that it is convex, so that it is enough to optimize the parallel capacity over pure separable states,
\begin{equation}
    \mathcal{C}_{\rm P}^{1}(H_{ab}) = \max_{\ket{\Psi} \in \mathcal{S}^1} \mathcal{C}_{\rm P}(\ket{\Psi}\bra{\Psi}, H_{ab})\,.
\end{equation}
Since $\ket{\Psi} = \ket{\phi_a} \otimes \ket{\psi_b}$ are product states, the corresponding matrix $\mathbf{T}$ is a Kronecker product of the corresponding Bloch vectors $\Vec{n}_{\phi_a}$ and $\Vec{n}_{\psi_b}$. In turn, such a matrix has a single non-zero eigenvalue, which is equal to the scalar product $(\Vec{n}_{\phi_a} \cdot \Vec{n}_{\psi_b})$. Therefore, only a single term in the eigenvalue decomposition \eqref{app:eq:ParCapBound} survives, and the optimal state corresponds to the case when the highest eigenvalue $\lambda_2^{|\mathbf{V}|}$ of $\mathbf{V}$ survives,
\begin{equation}
    \mathcal{C}_{\rm P}^{1}(H_{ab}) = 2 \lambda_2^{|\mathbf{V}|} \max_{\Vec{n}_{\phi_a}, \Vec{n}_{\psi_b}} (\Vec{n}_{\phi_a} \cdot \Vec{n}_{\psi_b}).
\end{equation}
Since the optimization is performed over pure states, so that both Bloch vectors have magnitude $1$, the maximum is achieved for $\Vec{n}_{\phi_a} = \Vec{n}_{\psi_b}$, and we obtain the following form of the parallel capacity criterion,
\begin{equation}
    \lambda_1^{|\mathbf{V}|} \lambda_1^{|\mathbf{T}|} + \lambda_2^{|\mathbf{V}|} \lambda_2^{|\mathbf{T}|} > \lambda_2^{|\mathbf{V}|}.
\end{equation}

On the other hand, the work fluctuation criterion for $d=2$ requires
\begin{equation}
    (\lambda_0^{|\mathbf{T}|})^2 + (\lambda_1^{|\mathbf{T}|})^2 + (\lambda_2^{|\mathbf{T}|})^2 > 1 - \Bigl| |\Vec{R}_a|^2 - |\Vec{R}_b|^2 \Bigr|^2.
\end{equation}
We question whether there exist entangled states that can be detected by the parallel capacity criterion but do not lead to large variance of stochastic work distribution. Inverting the previous inequality and taking into account that $\lambda_0^{|\mathbf{T}|} \leq \lambda_1^{|\mathbf{T}|} \leq \lambda_2^{|\mathbf{T}|}$, we find that entanglement in the states satisfying
\begin{equation}
    \Biggl(\lambda_0^{|\mathbf{T}|} < \sqrt{\frac{1 - \Bigl| |\Vec{R}_a|^2 - |\Vec{R}_b|^2 \Bigr|^2}{3}} \leq \frac{1}{\sqrt{3}} \Biggr) \cup \Biggl( \lambda_2^{|\mathbf{T}|} > \frac{\lambda_2^{|\mathbf{V}|}}{\lambda_1^{|\mathbf{V}|} + \lambda_2^{|\mathbf{V}|}} \geq \frac{1}{2} \Biggr)
\end{equation}
can be detected by parallel capacity, yet not by the work fluctuations criterion.

\section{Constraints for local unitality and approximate separability}
\label{app:approx_constraints}

The constraints that establish the relevant properties for $J_\Lambda$ as an approximation to a product of unitary channels are given by the last two lines in Eq.~\eqref{eq:SDP_upper_bound}. In what follows, we prove that the condition for local unitality [Eq.~\eqref{eq:def_local_unital}] in terms of its Choi operator is equivalent to 
\begin{equation}
    \mathrm{Tr}_{\mathtt{in}^{(i)}}[J_\Lambda] = \frac{1}{d_i}\mathrm{Tr}_{\mathtt{in}^{(i)},\,\mathtt{out}^{(i)}}[J_\Lambda]\otimes \mathbb{1}_{\mathtt{out}^{(i)}}.
    \label{eq:app_loc_unitality_Choi}
\end{equation}
Further below, we comment on the remaining constraint, which approximates separability of the channel $\Lambda$.

We restrict our discussion to the bipartite scenario to simplify notation, but the result can be straightforwardly adapted to the $n$-partite case.

\begin{lemma}
\label{lemma-loc-unital}
Eq.~\eqref{eq:def_local_unital} is equivalent to Eq.~\eqref{eq:app_loc_unitality_Choi}
\end{lemma}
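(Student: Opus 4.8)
The plan is to push both sides of the claimed equivalence through the Choi--Jamio\l kowski correspondence and reduce it to the elementary fact that a linear map produces outputs of the form $\mathbb{1}\otimes(\cdot)$ on a fixed tensor factor precisely when its Choi operator carries the same $\mathbb{1}\otimes(\cdot)$ structure on the matching factor. It suffices to treat the bipartite case with $i=1$, since for general $n$ one merely groups the factors $j\neq i$ into a single complement system. I will also use that, since the complex span of the density operators on $\mathcal{H}^{\mathtt{in}}_2$ is all of $\mathcal{L}(\mathcal{H}^{\mathtt{in}}_2)$, Eq.~\eqref{eq:def_local_unital} is equivalent to requiring $\Lambda(\mathbb{1}_1\otimes X)=\mathbb{1}_1\otimes X'$ for \emph{every} operator $X$ on $\mathcal{H}^{\mathtt{in}}_2$, with $X'$ depending linearly on $X$.

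First I would evaluate $\Lambda(\mathbb{1}_1\otimes X)$ using the action formula $\Lambda(\sigma)=\mathrm{Tr}_{\mathtt{in}}[J_\Lambda(\sigma^T\otimes\mathbb{1}_{\mathtt{out}})]$: with $\sigma=\mathbb{1}_1\otimes X$, carrying out the $\mathtt{in}^{(1)}$-trace first (trivial because of the $\mathbb{1}_1$ factor) gives $\Lambda(\mathbb{1}_1\otimes X)=\mathrm{Tr}_{\mathtt{in}^{(2)}}[M\,(X^T\otimes\mathbb{1}_{\mathtt{out}})]$ with $M:=\mathrm{Tr}_{\mathtt{in}^{(1)}}[J_\Lambda]$, an operator on $\mathcal{H}^{\mathtt{in}}_2\otimes\mathcal{H}^{\mathtt{out}}_1\otimes\mathcal{H}^{\mathtt{out}}_2$. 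Expanding $J_\Lambda$ in a product basis of the input and tracing out $\mathtt{in}^{(1)}$ also shows that $M$ is exactly the Choi operator of the reduced map $\tilde\Lambda:X\mapsto\Lambda(\mathbb{1}_1\otimes X)$.

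Next I would prove the key equivalence: $\tilde\Lambda(X)$ is of the form $\mathbb{1}_{\mathtt{out}^{(1)}}\otimes(\cdot)$ for all $X$ if and only if $M=\mathbb{1}_{\mathtt{out}^{(1)}}\otimes N$ for some $N$ on $\mathcal{H}^{\mathtt{in}}_2\otimes\mathcal{H}^{\mathtt{out}}_2$. The ``if'' direction is immediate from the action formula and linearity. For ``only if'' I would write $M=\sum_{a,a'}|a\rangle\langle a'|\otimes\tilde\Lambda(|a\rangle\langle a'|)$, the first factor living on the ancilla $\mathcal{H}^{\mathtt{in}}_2$, and invoke the hypothesis in its linearly extended form to replace each $\tilde\Lambda(|a\rangle\langle a'|)$ by $\mathbb{1}_{\mathtt{out}^{(1)}}\otimes\xi_{aa'}$, whence $M$ visibly factors. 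To conclude, tracing $\mathtt{out}^{(1)}$ out of $M=\mathbb{1}_{\mathtt{out}^{(1)}}\otimes N$ identifies $N=\tfrac{1}{d_1}\mathrm{Tr}_{\mathtt{in}^{(1)},\,\mathtt{out}^{(1)}}[J_\Lambda]$, and substituting this back reproduces Eq.~\eqref{eq:app_loc_unitality_Choi} up to the canonical permutation of tensor factors; conversely Eq.~\eqref{eq:app_loc_unitality_Choi} is precisely a restatement of $M=\mathbb{1}_{\mathtt{out}^{(1)}}\otimes N$, which closes the loop.

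I expect the only genuinely delicate point to be the ``only if'' half of the key equivalence, where one must deduce a \emph{global} factorization of the Choi operator from the output being $\mathbb{1}_{\mathtt{out}^{(1)}}\otimes(\cdot)$ on \emph{all} inputs, not merely on product ones --- this is exactly the role of the linear-extension remark, which licenses evaluating $\tilde\Lambda$ on the non-positive rank-one operators $|a\rangle\langle a'|$ appearing in $J_{\tilde\Lambda}$. Everything else is routine bookkeeping with partial traces and the Choi action formula, and the $n$-partite statement follows by the identical argument after coarse-graining the complement.
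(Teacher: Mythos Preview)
Your proposal is correct and follows essentially the same route as the paper: both arguments linearly extend the local-unitality condition from density matrices to all operators, evaluate on a spanning set of inputs to pin down the structure of $M=\mathrm{Tr}_{\mathtt{in}^{(i)}}[J_\Lambda]$, and then read off the tensor factorization with the identity on $\mathtt{out}^{(i)}$. The only cosmetic difference is that the paper expands $J_\Lambda$ in a generic orthogonal operator basis $\{B^1_{\mu_1}\otimes B^2_{\mu_2}\}$ (with $B^i_0=\mathbb{1}$) and extracts coefficient conditions, whereas you recognize $M$ directly as the Choi operator of the reduced map $\tilde\Lambda$ and use the matrix units $|a\rangle\langle a'|$; your framing is slightly more conceptual but the logical content is identical.
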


\begin{proof}
First, note that, in terms of Choi states, local unitality [Eq.~\eqref{eq:def_local_unital}] is equivalent to
\begin{equation}
\mathrm{Tr}_{\mathtt{in}}\left[J_\Lambda \,\left(\rho^T \otimes \mathbb{1}_{\mathtt{in}^{(i)}} \otimes \mathbb{1}_{\mathtt{out}}\right)\right] = \rho' \otimes \mathbb{1}_{\mathtt{out}^{(i)}}.
\label{eq:loc_unital_direct}
\end{equation}
By tracing subsystem $\mathtt{out}^{(i)}$ on both sides of the equation above, it can be deduced that $\rho' = \frac{1}{d_i}\,\mathrm{Tr}_{\mathtt{in},\,\mathtt{out}^{(i)}}[J_\Lambda \,\left(\rho^T \otimes \mathbb{1}_{\mathtt{in}^{(i)}}\otimes\mathbb{1}_{\mathtt{out}}\right)]$, and the condition for local unitality can be equivalently rewritten as
\begin{align}
\mathrm{Tr}_{\mathtt{in}}&\left[J_\Lambda \left(\rho^T \otimes \mathbb{1}_{\mathtt{in}^{(i)}} \otimes \mathbb{1}_{\mathtt{out}}\right)\right] = \nonumber \\ 
&\frac{1}{d_i}\,\mathrm{Tr}_{\mathtt{in},\,\mathtt{out}^{(i)}}\left[J_\Lambda \,\left(\rho^T \otimes \mathbb{1}_{\mathtt{in}^{(i)}}\otimes\mathbb{1}_{\mathtt{out}}\right)\right] \otimes \mathbb{1}_{\mathtt{out}^{(i)}}.
\label{eq:loc_unital_direct_2}
\end{align}\\

Assume all spaces involved to be finite-dimensional and let $\{B^1_{\mu_1} \otimes B^2_{\mu_2}\}_{\mu_1,\mu_2}$ be a orthogonal basis for operators acting on the input spaces, where $B^i_0 = \mathbb{1}_{\mathtt{in}^{(i)}}$ and $\mathrm{Tr}[B^{i\,\dag}_\mu\,B^{i}_\nu ] = d_i\,\delta_{\mu,\nu}$. Let then $J_\Lambda$ be decomposed as
\begin{equation}
    J_\Lambda = \sum_{\mu_1,\mu_2} B^1_{\mu_1} \otimes B^2_{\mu_2} \otimes A_{\mu_1,\mu_2},
    \label{eq:choi_basis_decomp}
\end{equation}
where $A_{\mu_1,\mu_2}$ are the remaining part of the decomposition, acting on the output spaces. Extend the action of $J_\Lambda$ to any bounded operator on the input spaces and use the basis elements (transposed) as inputs for the channel $\Lambda$, so that, e.g., for $\rho = \left(B^{1}_{\mu_1}\right)^T$, it is obtained
\begin{equation}
\frac{1}{d_1\,d_2}\mathrm{Tr}_{\mathtt{in}}\left[J_\Lambda \,\left(B^1_{\mu_1} \otimes \mathbb{1}_{\mathtt{in}^{(2)}} \otimes \mathbb{1}_{\mathtt{out}}\right)\right] = A_{\mu_1,0},
\label{eq:action_basis}
\end{equation}
which, combined with Eq. \eqref{eq:loc_unital_direct_2}, implies
\begin{equation}
A_{\mu_1,0} = \frac{1}{d_2}\,\mathrm{Tr}_{\mathtt{out}^{(2)}}\left[ A_{\mu_1,0} \right] \otimes \mathbb{1}_{\mathtt{out}^{(2)}}.    
\end{equation}
These are equations relating the basis components of $\mathrm{Tr}_{\mathtt{in}^{(2)}}[J_\Lambda]$, adjoining the elements $B^1_{\mu_1}$ in the corresponding input space and summing over the free index, the basis-independent expression is found
\begin{equation}
\mathrm{Tr}_{\mathtt{in}^{(2)}}[J_\Lambda] = \frac{1}{d_2}\,\mathrm{Tr}_{\mathtt{in}^{(2)},\,\mathtt{out}^{(2)}}\left[ J_{\Lambda} \right] \otimes \mathbb{1}_{\mathtt{out}^{(2)}}.    
\end{equation}
A similar reasoning can be applied for the identity on subsystem $1$ and Eq. \eqref{eq:app_loc_unitality_Choi} is obtained.\\

On the reverse direction, assuming Eq. \eqref{eq:app_loc_unitality_Choi} valid, it suffices to multiply the resulting operators on both sides by $\rho^T_{S-S_i}$ (meaning a state that acts on all subsystems except for $\mathtt{in}^{(i)}$). Use that $\mathrm{Tr}_{\mathtt{in}^{(i)}}[J_\Lambda]\cdot\rho^T_{S-S_i} = \mathrm{Tr}_{\mathtt{in}^{(i)}}[J_\Lambda\,(\rho^T_{S-S_i} \otimes \mathbb{1}_{\mathtt{in}^{(i)}})]$, and trace out the remaining input subspaces. The resulting relation is precisely Eq. \eqref{eq:loc_unital_direct_2}, which is the local unitality condition.
\end{proof}

\subsection{Some formal clarifications}
\label{clarifications}
Referring to Eq.\,\eqref{conv-def},
we notice that
because of linearity, we have 
\begin{equation}
\label{opt-pure}
     \min_{\mathbf{\Phi}\in \mathcal{C}} \tr  \Bigl[  \mathbf{\Phi}(\rho) H\Bigr]= \min_{\mathbf{\Phi}\in \textit{Conv}(\mathcal{C})} \tr  \Bigl[ \mathbf{\Phi}(\rho) H\Bigr]
     \,.
\end{equation}
This is enough to show, for instance, that the 
relaxation from 
$
\otimes_{i=1}^{n}\mathcal{U}^{\rm ni}_i
$ to $
Conv(\otimes_{i=1}^{n}\mathcal{U}^{\rm ni}_i)$
   in Eq.\,\eqref{inclusions-sets} comes at no cost.
Furthermore, the argument that achieves the maximization of the rhs can always be chosen such that is part of the $\mathcal{C}$ set. 

We now denote with $\mathcal{C}_{A},\mathcal{C}_{B}$ the two sets of channels that act locally on the A, B subsystems respectively.
\begin{lemma}
\label{lemma-convconv}
We claim that
\begin{equation}
    \min_{
        \mathbf{\Phi}_{A}\in \mathcal{C}_{A},
        \mathbf{\Phi}_{B}\in \mathcal{C}_{B}}
    \tr  \Bigl[ \mathbf{\Phi}_{A}\otimes \mathbf{\Phi}_{B}(\rho) H\Bigr]= 
    \min_{\mathbf{\Phi}_{A}\in \textit{Conv}(\mathcal{C}_{A}),\mathbf{\Phi}_{B}\in \textit{Conv}(\mathcal{C}_{B})} \tr  \Bigl[\mathbf{\Phi}_{A}\otimes \mathbf{\Phi}_{B}(\rho) H\Bigr] \,.
    \label{ConvConv}
\end{equation}
\end{lemma}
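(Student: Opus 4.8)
The plan is to reduce the claim to the linearity property already recorded in \eqref{opt-pure}, using the set inclusions of \eqref{inclusions-sets}. Write $\mathcal{C}_{A}\otimes\mathcal{C}_{B}:=\{\mathbf{\Phi}_{A}\otimes\mathbf{\Phi}_{B}:\mathbf{\Phi}_{A}\in\mathcal{C}_{A},\,\mathbf{\Phi}_{B}\in\mathcal{C}_{B}\}$ for the set of product channels. The one elementary fact that drives everything is that \emph{a product of convex mixtures is a convex mixture of products}: if $\mathbf{\Phi}_{A}=\sum_{i}p_{i}\mathbf{\Phi}_{A}^{(i)}\in\textit{Conv}(\mathcal{C}_{A})$ and $\mathbf{\Phi}_{B}=\sum_{j}q_{j}\mathbf{\Phi}_{B}^{(j)}\in\textit{Conv}(\mathcal{C}_{B})$, then bilinearity of the tensor product gives $\mathbf{\Phi}_{A}\otimes\mathbf{\Phi}_{B}=\sum_{i,j}p_{i}q_{j}\,\mathbf{\Phi}_{A}^{(i)}\otimes\mathbf{\Phi}_{B}^{(j)}$, and since $\{p_{i}q_{j}\}_{i,j}$ is itself a probability distribution this exhibits $\mathbf{\Phi}_{A}\otimes\mathbf{\Phi}_{B}$ as an element of $\textit{Conv}(\mathcal{C}_{A}\otimes\mathcal{C}_{B})$ (cf.\ \eqref{conv-def}). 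Hence one has the chain
\begin{equation}
\mathcal{C}_{A}\otimes\mathcal{C}_{B}\;\subseteq\;\textit{Conv}(\mathcal{C}_{A})\otimes\textit{Conv}(\mathcal{C}_{B})\;\subseteq\;\textit{Conv}(\mathcal{C}_{A}\otimes\mathcal{C}_{B})\,.
\end{equation}

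With this in place the argument is a two-line sandwich. The objective $\mathbf{\Phi}\mapsto\tr[\mathbf{\Phi}(\rho)H]$ is linear in the channel (equivalently, in its Choi state), so \eqref{opt-pure} applied with $\mathcal{C}=\mathcal{C}_{A}\otimes\mathcal{C}_{B}$ yields $\min_{\mathcal{C}_{A}\otimes\mathcal{C}_{B}}\tr[\mathbf{\Phi}(\rho)H]=\min_{\textit{Conv}(\mathcal{C}_{A}\otimes\mathcal{C}_{B})}\tr[\mathbf{\Phi}(\rho)H]$. The minimum over the intermediate set $\textit{Conv}(\mathcal{C}_{A})\otimes\textit{Conv}(\mathcal{C}_{B})$ is then trapped between two equal numbers, hence equals both, which is exactly \eqref{ConvConv}.

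A self-contained variant that does not even invoke \eqref{opt-pure} runs as follows: for arbitrary $\mathbf{\Phi}_{A}\in\textit{Conv}(\mathcal{C}_{A})$, $\mathbf{\Phi}_{B}\in\textit{Conv}(\mathcal{C}_{B})$ the same expansion gives $\tr[(\mathbf{\Phi}_{A}\otimes\mathbf{\Phi}_{B})(\rho)H]=\sum_{i,j}p_{i}q_{j}\,\tr[(\mathbf{\Phi}_{A}^{(i)}\otimes\mathbf{\Phi}_{B}^{(j)})(\rho)H]\ge\min_{k,l}\tr[(\mathbf{\Phi}_{A}^{(k)}\otimes\mathbf{\Phi}_{B}^{(l)})(\rho)H]$, so the right-hand side of \eqref{ConvConv} is bounded below by its left-hand side; the reverse inequality is immediate from $\mathcal{C}_{A}\otimes\mathcal{C}_{B}\subseteq\textit{Conv}(\mathcal{C}_{A})\otimes\textit{Conv}(\mathcal{C}_{B})$. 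The $n$-partite statement is proved identically, with bilinearity of $\otimes$ replaced by multilinearity of $\otimes_{i=1}^{n}$. I do not expect a genuine obstacle: the only point needing care is bookkeeping — keeping straight that $\otimes_{i}\textit{Conv}(\mathcal{C}_{i})$ denotes the set of products of \emph{individually} convexified sets, not the convex hull of the products, and verifying that the coefficients $\prod_{i}p^{(i)}$ produced by expanding a product of mixtures genuinely form a probability distribution. Once this is made explicit, the statement follows from linearity alone.
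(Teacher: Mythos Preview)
Your proof is correct and follows essentially the same route as the paper: both arguments hinge on the inclusion $\textit{Conv}(\mathcal{C}_{A})\otimes\textit{Conv}(\mathcal{C}_{B})\subseteq\textit{Conv}(\mathcal{C}_{A}\otimes\mathcal{C}_{B})$ together with the linearity fact \eqref{opt-pure}. Your sandwich presentation is slightly cleaner than the paper's---the paper proves the trivial inequality by fixing the optimal $\overline{\mathbf{\Phi}}_{B}$ and applying \eqref{opt-pure} in one variable, whereas you get it directly from the set inclusion $\mathcal{C}_{A}\otimes\mathcal{C}_{B}\subseteq\textit{Conv}(\mathcal{C}_{A})\otimes\textit{Conv}(\mathcal{C}_{B})$---but the non-trivial direction and the underlying idea are identical.
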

     
This is not trivially implied by the previous lemma \eqref{opt-pure}, since $\textit{Conv}(\mathcal{C}_{A})\otimes \textit{Conv}(\mathcal{C}_{B})\subseteq \textit{Conv}(\mathcal{C}_{A}\otimes \mathcal{C}_{B}) $, however it can be proven to hold true.
\paragraph*{Proof.---}
Let $\overline{\mathbf{\Phi}}_{A},\overline{\mathbf{\Phi}}_{B} $ be the pair of channels which achieve the minimum of the lhs of Eq.\,(\ref{ConvConv}). Then we have 
\begin{align}
   & \min_{\mathbf{\Phi}_{A}\in \textit{Conv}(\mathcal{C}_{A}),\mathbf{\Phi}_{B}\in \textit{Conv}(\mathcal{C}_{B})} \tr  \Bigl[\mathbf{\Phi}_{A}\otimes \mathbf{\Phi}_{B}(\rho) H\Bigr] \leq   \min_{\mathbf{\Phi}_{A}\in \textit{Conv}(\mathcal{C}_{A})} \tr  \Bigl[\mathbf{\Phi}_{A}\otimes \overline{\mathbf{\Phi}}_{B}(\rho) H\Bigr]= \\
   &\min_{\mathbf{\Phi}_{A}\in \mathcal{C}_{A}} \tr  \Bigl[\mathbf{\Phi}_{A}\otimes \overline{\mathbf{\Phi}}_{B}(\rho) H\Bigr]=  \min_{\mathbf{\Phi}_{A}\in \mathcal{C}_{A},\mathbf{\Phi}_{B}\in \mathcal{C}_{B}} \tr  \Bigl[ \mathbf{\Phi}_{A}\otimes \mathbf{\Phi}_{B}(\rho) H\Bigr].
\end{align}
On the other hand, for \eqref{opt-pure} we know that 
\begin{equation}
    \min_{\mathbf{\Phi}_{A}\in \mathcal{C}_{A},\mathbf{\Phi}_{B}\in \mathcal{C}_{B}} \tr  \Bigl[ \mathbf{\Phi}_{A}\otimes\mathbf{\Phi}_{B}(\rho) H\Bigr]=
    \min_{\mathbf{\Phi} \in \textit{Conv}(\mathcal{C}_{A} \otimes \mathcal{C}_{B})} \tr  \Bigl[ \mathbf{\Phi}(\rho) H\Bigr]\,. 
\end{equation}
The latter is smaller or equal than
\begin{equation}
    \min_{\mathbf{\Phi}\in Conv(\mathcal{C}_{A}) \otimes Conv(\mathcal{C}_{B})} \tr  \Bigl[ \mathbf{\Phi}(\rho) H\Bigr] 
\end{equation}
because $\textit{Conv}(\mathcal{C}_{A})\otimes \textit{Conv}(\mathcal{C}_{B})\subseteq \textit{Conv}(\mathcal{C}_{A}\otimes \mathcal{C}_{B})$, 
implying
\begin{equation}
    \min_{\mathbf{\Phi}\in Conv(\mathcal{C}_{A}) \otimes Conv(\mathcal{C}_{B})} \tr  \Bigl[ \mathbf{\Phi}(\rho) H\Bigr]
    \geq 
    \min_{\mathbf{\Phi}_{A}\in \mathcal{C}_{A},\mathbf{\Phi}_{B}\in \mathcal{C}_{B}} \tr  \Bigl[ \mathbf{\Phi}_{A}\otimes\mathbf{\Phi}_{B}(\rho) H\Bigr]\,,
\end{equation}
hence 
concluding the proof of Eq.\,\eqref{ConvConv}. 

Extending the reasoning to $n$ subsystems, this explicitly shows that
the relaxation from 
$\otimes_{i=1}^{n}\mathcal{U}_i $ to $\otimes_{i=1}^{n}Conv(\mathcal{U}_i)$
in Eq.\,\eqref{inclusions-sets} comes at no cost too.

\subsection{Approximate separability}

The remaining constraint in Eq. \eqref{eq:SDP_upper_bound} specifies that $J_\Lambda$ should be approximately fully separable, considering each party to be composed by the pair input and output subsystems. To simplify the discussion, a bipartite system is considered with subsystems $A$ and $B$. Extending the results to channels that act on more subsystems is immediate.

A separable channel $\Lambda$ is a channel with the structure
\begin{equation}
    \Lambda(\bullet) = \sum_i p_i\,\Lambda^i_A \otimes \Lambda^i_B(\bullet),
    \label{eq:app_sep_channel}
\end{equation}
where $p_i$ are the coefficients of the convex combination of the product channels $\Lambda^i_A \otimes \Lambda^i_B$, such that $\Lambda^i_A \otimes \Lambda^i_B (\rho_A \otimes \rho_B) = \Lambda^i_A(\rho_A) \otimes \Lambda^i_B(\rho_B)$. Channel separability immediately translates into separability of the corresponding Choi state across the different pairs of input and output subsystems:
\begin{align}
    J_\Lambda &= \sum p_i\,|u_1,\,u_2\rangle\langle v_1,\,v_2| \otimes \Lambda^i_A(|u_1\rangle\langle v_1|)\otimes \Lambda^i_B(|u_2\rangle\langle v_2|) \nonumber \\
    &= \sum_i p_i \,J_{\Lambda^i_A} \otimes J_{\Lambda^i_B},
\end{align}
where the sum in the first equality is done on all variables $i, u_1,u_2,v_1,v_2$. On the second equality, $J_{\Lambda^i_X}$ are the Choi states for each local channel ($X=A,B$): $J_{\Lambda^i_X} = \sum_{u,v} |u\rangle\langle v| \otimes \Lambda^i_X(|u\rangle\langle v|)$.

Therefore, a test for separability on $J_\Lambda$ similar to tests for separability on quantum states must be applied to ensure the ideal structure for the channel $\Lambda$. Since product states are extremal points in the set of separable states, it is expected that an approximation that converges to the set of separable channels should also converge to a pure product of unital channels. The hierarchy of tests introduced in \cite{DPS_2004} simplifies the problem of separability to linear restrictions on extended states, that represent a replication of the information used to produce the original state, a feat that can only be faithfully performed in an indefinite manner for separable states \cite{Caves_2002}.

Each level of the hirearchy includes the previous ones, with the ``level zero" representing postivity of partial transpose (PPT), meaning that not only $J_\Lambda$ must be positive semidefinite, but also the transposes over $\mathtt{in}^{(i)}, \mathtt{out}^{(i)}$, for any subsystem $i$. This is so because a full transpose of a channel preserves its positive-semidefiniteness, consequently, the partial transpose of a separable Choi state should correspond to the transposition of a local channel, mapping a valid separable channel into a different valid separable channel. Failure in complying to this criterion implies that the channel cannot be taken as separable.

Let $\mathsf{DPS}_k$ correspond to the set of Choi states that pass the $k$-th level of the test hierarchy, level $0$ corresponding to operators $J_\Lambda$ with PPT. For $k \geq 1$, the set $\mathsf{DPS}_k$ is also characterized by the operators that possess a $k$-symmetric extension in all parties, i.e. operators $J_\Lambda$ that admit a Choi state $J^{(k)}_\Lambda \geq 0$ acting on the extended space $\mathcal{H}^{\otimes(k+1)}_A \otimes \mathcal{H}_B$, such that
\begin{align}
    \mathsf{SWAP}_{a,b}\,\,J^{(k)}_\Lambda\,\,\mathsf{SWAP}_{a,b} &= J^{(k)}_\Lambda,\,\, \forall a,b \in \{0,\ldots,k\},\,a<b \nonumber \\
    \left(J^{(k)}_\Lambda\right)^{T_{0\ldots j}} &\geq 0,\,\, \forall j \in \{0,\ldots,k\}, \nonumber \\
    \mathrm{Tr}_{A_1\ldots A_k}[J^{(k)}_\Lambda] &= J_\Lambda,
    \label{eq:app_DPSk}
\end{align}
where $\mathsf{SWAP}_{a,b}$ is the swap operator for swapping subsytems $A_a$ and $A_b$ (indexed from $0$ to $k$) and $T_{0\ldots j}$ is partial transposition over all $j+1$ first copies of the subsystem $A$. 

The hierarchical structure of the test can be seen from Eq.~\eqref{eq:app_DPSk}: a state that admits a $(k+1)$-symmetric extension necessarily admits a $k$-symmetric extension, since tracing $J^{(k+1)}_\Lambda$ over any of the copy subsystems results in a Choi state that satisfies all the required properties for the $k$-th extension. Furthermore, a state is separable if and only if it passes the tests for all $k \geq 0$ \cite{Caves_2002, DPS_2004}, thereby ensuring completeness to the test. Any channel that is not exactly separable will fail the test for some given $k$. 

Although finding this certification could be costly if higher levels of the hierarchy become necessary, in practice it is observed that the $0$-th level test already bounds well the separable set for bipartite, low-dimensional systems. With this, a convenient approximation to separable channels is attained.

\section{Details on lower bounding PE and comparison between upper bounds}
\label{app:DirectUnitary}

For the results in section \ref{sec:bounds}, a lower bound on PE was computed by direct optimization of the local unitaries applied on each subsystem. For qubits, a convenient parametrization is given by
\begin{equation}
    U_{\theta,\phi,\gamma} = \begin{bmatrix}
        e^{i(\gamma-\phi)/2}\cos(\theta/2) & e^{-i(\gamma+\phi)/2}\sin(\theta/2) \\
        e^{i(\gamma+\phi)/2}\sin(\theta/2) & -e^{-i(\gamma-\phi)/2}\cos(\theta/2)
    \end{bmatrix},
\end{equation}
with $\theta \in [0,\pi]$ and $\phi, \gamma \in [0, 2\pi]$. This parametrization has been used for optimization using the Werner states (see Fig. \ref{fig:algo_comparison_Werner} in the main text), where direct nonlinear optimization using the Nelder-Mead algorithm was performed over $U_{\theta,\phi,\gamma}$. Given the symmetries of Werner states [Eq. \eqref{W-invariance}], optimization only over subsystem A was considered, although, given the small number of parameters, good convergence can also be obtained in more general cases, with optimization over both parties. In Fig. \ref{fig:algo_comparison_Werner}, convergence is observed, as the lower bound from direct optimization matches the upper bound given by the analytical formula of Eq. \eqref{eq:LMMBound}. For higher dimensions, an alternative parametrization in terms of generators of $SU(N)$ has been used. For a given orthonormal basis of Hermitian $N \times N$ matrices, $\{\sigma^j\}_{1 \leq j \leq d^2-1}$, with traceless $\sigma^j$, unitaries are given by $U = \exp(-i\,\sum_j x_j\,\sigma^j)$, with $x_j \in [0,\pi]$.  Lower bounds for double qutrit systems (Fig. \ref{fig:algo_comparison_d33}) were obtained using the parametrization in terms of generators. Closure of the gap between estimates indicates that the approximation is effective for given pairs of states and Hamiltonians produced.

\end{widetext}

\end{document}